\date{\today}
\newtheorem{corollary}{Corollary}
\newtheorem{proposition}{Proposition}
\newtheorem{lemma}{Lemma}
\newtheorem{definition}{Definition}
\newtheorem{theorem}{Theorem}
\newtheorem{remark}{Remark}
\newcommand{\N}{\mathbb{N}} 
\newcommand{\C}{\mathbb{C}} 
\newcommand{\Z}{\mathbb{Z}} 
\newcommand{\R}{\mathbb{R}} 
\newcommand{\bx}{{\bf x}}
\newcommand{\by}{{\bf y}}
\newcommand{\HA}{H_{{\bf A}}}
\newcommand{\DA}{D_{{\bf A}}}
\newcommand{\SA}{  {\rm sgn}({\DA})}
\newcommand{\Sps}[2]{\big\langle #1,#2 \big\rangle} 
\newcommand{\sps}[2]{\langle #1,#2 \rangle} 
\newcommand{\core}{C_0^\infty(\R^2,\C^2)}
\newcommand{\hilbert}{L^2(\R^2,\C^2)}
\newcommand{\ri}{\mathrm{i\,}}
\newcommand{\rd}{\mathrm{d}}
\newcommand{\pn}{{k_n}}
\newcommand{\curla}{{\rm curl\,}{\bf A}}
\title[On the essential spectrum of two-dimensional Pauli operators]
{On the essential spectrum of two-dimensional Pauli operators with repulsive potentials}
\author{Josef Mehringer}
\address{Josef Mehringer\\
Fakult\"at f\"ur Mathematik\\
Ludwig-Maximilians-Universit\"at M\"unchen\\
Theresienstra\ss e 39\\
80333 M\"unchen, Germany.}
\email{josef.mehringer@gmail.com}
\subjclass[2010]{Primary 81Q10; Secondary 47A25, 81Q80}
\keywords{Pauli operator, dense pure point spectrum}
\begin{document}
\begin{abstract}
We investigate the spectrum of the two-dimensional Pauli operator, describing 
a spin-$\tfrac{1}{2}$ particle in a magnetic field $B$, with a negative scalar
potential $V$ such that $|V|$ grows at infinity. In particular, we obtain
criteria for discrete and dense pure-point spectrum.
\end{abstract}
\maketitle
\section{Introduction}
For modeling the kinetic energy of a non-relativistic
spin-$\tfrac{1}{2}$ particle in the plane, moving
under a magnetic field $B$ in the perpendicular direction 
to the plane, one uses the two-dimensional Pauli operator 
$$ H_{\bf A} :=
\big[{\boldsymbol \sigma}\cdot 
\big(- \ri \nabla- {\bf A}\big) \big]^2 = 
\big(- \ri \nabla- {\bf A}\big)^2 - \sigma_3 B
\quad \mbox{on}
\ \hilbert ,
$$
where ${\bf A}$ is a vector potential associated to $B$,
i.e. $B=\curla :=\partial_1A_2 -\partial_1 A_1$.
Here, ${\boldsymbol \sigma} =(\sigma_1,\sigma_2)$ and
$\sigma_3$ are  the Pauli matrices
\begin{align*}
\sigma_1 = 
\begin{pmatrix}
0 & 1 \\ 1 & 0
\end{pmatrix} , \quad
\sigma_2 = 
\begin{pmatrix}
0 & -\ri \\ \ri & 0
\end{pmatrix}, \quad
\sigma_3 = 
\begin{pmatrix}
1 & 0 \\ 0 & -1
\end{pmatrix}.
\end{align*}
To study the behaviour of such spin-$\tfrac{1}{2}$ particles
(e.g. electrons)
in presence of an additional electric potential $V$, we 
investigate the spectrum of the operator
$$
H := H_{\bf A} + V =
\big[{\boldsymbol \sigma}\cdot 
\big(- \ri \nabla- {\bf A}\big) \big]^2 + V
\quad \mbox{on}
\ \hilbert.
$$
If $V$ is non-negative or decays at infinity 
(e.g. potentials with Coulomb singularities), spectral 
properties of the magnetic Schr\"odinger operator
$\big(- \ri \nabla- {\bf A}\big)^2 + V$,
as well as of the Pauli operator $\HA + V$
(in dimension d = 2 or 3)
have been widely studied over the last decades 
(see, e.g. \cite{Avron_Herbst_Simon_1}, \cite{Kirsch_simon} or
\cite{Erdoes_overview} for a latest overview). 
In this article, instead, we want to point out some 
interesting features of the spectrum of $H$ for potentials $V$ 
tending to $-\infty$ as $|\bx| \to \infty$.  Since such
scalar potentials result in an operator $H$,  unbounded from below,
it is necessary to discuss questions related to the self-adjointness
of $H$. We emphasize that, since we also consider unbounded 
magnetic fields $B$,  the self-adjointness of $H$ cannot simply 
be reduced to the one of the magnetic Schr\"odinger operator.

One motivation for the following considerations is an observation
made in \cite{MehringStock} for the two-dimensional massless
magnetic Dirac operator coupled to an electric potential $V$; 
There, an accumulation process of spectral points has been observed, 
governed by the ratio $|V^2/B|$ at infinity. This phenomenon can be  
ascribed to the non-confining effect of $V$ in the case of the Dirac 
operator. Regarding the Pauli operator, the influence of an additional 
scalar potential $V$ on the spectrum $\sigma(H)$ depends crucially 
on the sign of $V$.  For simplicity, we outline this dependence in the
case of a constant magnetic field $B(x) = B_0$:
a positive potential $V$ growing at infinity
always leads to discrete spectrum of the operator $H$, 
independently of the field strength $B_0$
(see e.g. \cite{kondratiev2005}). Such trapping potentials 
only enhance a localization effect caused by $B$ that 
generates eigenvalues and spectral gaps (proportional to $B_0$). 
If we instead consider negative potentials $V$, the situation is 
quite different since the particle lowers its energy by moving
in regions where $V$ is small.
A scalar potential $V$ converging to $- \infty$ as $|\bx| \to \infty$
has therefore a delocalizing effect, i.e. the particle tends to escape 
any compact region of the  plane. Our results show that such negative
potentials $V$ (describing for example constant radial fields)
counteract the localizing effect of ``hard'' magnetic fields $B$, as 
they close spectral gaps induced by $B$:
\begin{itemize}
\item If $V$ converges to $-\infty$, but remains small compared
          to $B$,  the spectrum  $\sigma(H)$ is discrete, i.e. it
          consists only of eigenvalues of finite multiplicity. 
\item If $V$ is comparable to $B$, more precisely $|V | \approx 2B$ at
          infinity,  points in the essential
          spectrum occur.
\item If $V$ overtakes $B$, more explicitly $|V/B| \to \infty$ as
          $|\bx| \to  \infty $ (at least along a path),  the spectrum 
          $\sigma(H)$ covers the whole real line.
\end{itemize}
One may compare the third claim with the result in \cite{SimonMiller80}
on $\HA$ for decaying magnetic fields.
The precise statements of the claims above are contained in 
Theorems \ref{thm1}$-$\ref{thm4} of Sect. \ref{mainresults}.
We remark that the case $|V/B| \to \infty$ as 
$|\bx| \to \infty$ is treated by Theorems \ref{thm3} and \ref{thm4}.
Unlike Theorem \ref{thm4}, which is only valid for constant 
magnetic fields, Theorem \ref{thm3} covers also non-constant fields $B$,
but requires stronger constraints on the growth of $V$. Thus, 
the important case $B =B_0$ is adressed by two theorems.
The ideas of the proofs of Theorem \ref{thm1}$-$\ref{thm3} originate
from those used to prove the results in \cite{MehringStock}. However,
since we work with a second-order operator, the proofs are technically 
more laborious. Theorem \ref{thm4} is based on a further
construction of a Weyl sequence, obtained by treating $V$
locally as a potential of a constant electric field. This is a refined 
ansatz compared to the method used
for the proof of Theorem \ref{thm3}.\\ 

{\it The organization of this article is as follows:} In the next
section some known facts about the Pauli operator are recapitulated.
We present our precise results in Section \ref{mainresults},
provided with some remarks and important applications.
In Section \ref{prooft1} we give the proof of Theorem \ref{thm1}.
The proofs of Theorems \ref{thm2} and \ref{thm3} are contained in 
Section \ref{prooft2}, while the proof of Theorem \ref{thm4}
can be found in the last section.
In the appendix, attached to the main text, we give a proof
of the essential self-adjointness of the Pauli operator.
\section{Basic properties of the Pauli operator}\label{susy}
In this section we point out some basic facts about the Pauli operator
and the massless Dirac operator $\DA$, whose square equals $\HA$.
For a vector potential ${\bf A} \in C^1(\R^2,\R^2)$ generating the field
$B =\curla \in C(\R^2, \R)$,
the Hamiltonian $\DA$ is defined as the closure of the operator
\begin{align}
  \label{eq:5}
 {\boldsymbol \sigma}\cdot \big(- \ri \nabla- {\bf
    A}\big)=\left(
\begin{array}{cc}
0&d^*\\
d&0  
  \end{array} \right)
\quad 
\mbox{on} \ \core,
\end{align}
which is essentially self-adjoint on the given core (see
\cite{Chernoff77}). In particular, $d$ and $d^*$ can be 
seen as closed operators, i.e. we use the notation
\begin{align}\label{carolina}
d =\overline{-\ri \partial_1 -A_1+
\ri(-\ri \partial_2 -A_2)
\upharpoonright}_{C^\infty_0(\R^2,\C)} 
\end{align}
and analogously for $d^*$. 
One observes that $d, d^*$ satisfy the commutation relation 
\begin{align}
  \label{eq:1}
  [d,d^*]\varphi:=(dd^*-d^*d)\varphi=2B\varphi
\quad {\rm for}\ \varphi \in C^\infty_0(\R^2,\C).
\end{align}
We can write
$$\HA = \DA ^2 = \left(
\begin{array}{cc}
d^*d&0\\
0&dd^* 
 \end{array}
 \right)
\quad 
\mbox{on} \ \core  $$
and consider $\HA$ as a self-adjoint operator on 
$\{ \psi \in {\mathcal D}(\DA) \, | \, \DA \psi \in {\mathcal D} (\DA) \}$ 
given by the Friedrichs extension.
The two components $dd^*$ and $d^*d$ of $\HA$ are unitarily 
equivalent on the orthogonal complement of
${\rm ker}(\HA) ={\rm ker}(\DA)$.
To verify this we note first that, due to
the matrix structure of $\DA$, we have
\begin{align}
  \label{eq:8}
  {\rm sgn}({\DA}):=\frac{\DA}{|\DA|}=\left(\begin{array}{cc}
0&s^*\\
s&0  
\end{array}\right)
\end{align}
on
${\rm ker} (\DA)^\perp=
{\rm ker}(d)^\perp\oplus {\rm ker}(d^*)^\perp$.
Since $ {\rm sgn}({\DA})^2 = {\rm Id}$ on 
${\rm ker}(\DA)^\perp$, the maps 
\begin{align}\label{super0}
  &s:{\rm ker}(d)^\perp\to{\rm ker}(d^*)^\perp, \qquad 
s^*: {\rm ker}(d^*)^\perp\to {\rm ker}(d)^\perp
\end{align}
are unitary and conjugated to each other. By the operator 
identity $\HA = \DA^2 = \SA \DA^2 \SA $ one concludes that 
\begin{align}
\label{super}
\left(
 \begin{array}{cc}
d^*d&0\\
0&dd^*  
  \end{array}
  \right)\varphi = \left(\begin{array}{cc}
s^*dd^*s&0\\
0&sd^*ds^*  
  \end{array}
  \right) \varphi 
\end{align}
for any 
$\varphi=(\varphi_1,\varphi_2)^{\rm T}$
with $\varphi_1\in \mathcal{D}({d^*d})\cap {\rm ker}(d)^\perp$ and
$\varphi_2\in \mathcal{D}({dd^*})\cap {\rm ker}(d^*)^\perp$. Hence,
on $\ker (d)^\perp$ the operator $d^*d$ is unitarily
equivalent to $dd^*$ (considered as an operator
on $\ker (d^*)^\perp$). Let us denote the orthogonal 
projection on ${\rm ker} (\DA)$ by $P_0$ and the orthogonal 
projections on ${\rm ker} (d)$, ${\rm ker} (d^*)$ by  $\pi, \pi_*$.  
Further, we set
$$ P_0^\perp := \mathbbm{1}- P_0 , \quad
\pi^\perp := \mathbbm{1}- \pi , \quad
\pi_*^\perp := \mathbbm{1}- \pi_* . $$

To define our full Hamiltonian let 
${\bf A} \in C^1(\R^2,\R^2)$ and $B,V \in C(\R^2,\R)$ 
be such that $B=\curla$, then $H$ is given by 
$$H \varphi = \big[\DA ^2 + V\big] \varphi  =  
\big[ \big(-\ri \nabla -{\bf A}\big)^2 - \sigma_3 B + V \big]
\varphi \quad 
\mbox{for} \ \varphi \in \core\,. $$
In general, the closure of this densely defined operator is not 
self-adjoint without any restriction on the growth rate of $V$
at infinity. However, there are conditions, very similar to those
for the classical Schr\"odinger operator, to ensure essential
self-adjointness.
\begin{proposition}\label{EssSelf}
Let $B, V \in C^1(\R^2,\R)$
and ${\bf A}\in C^2(\R^2,\R^2)$ 
with $B=\curla$. In addition, 
assume that $V$ fulfills the lower bound
\begin{align}\label{growthcond}
V(\bx) \ \ge \ -c|\bx|^2 + d\,, \quad  \ \bx \in \R^2,
\end{align}
for some constants $c>0$, $d \in \R$.
Then $H$ is essentially self-adjoint on $\core$. 
 \end{proposition}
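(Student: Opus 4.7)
The plan is to adapt the Faris--Lavine essential self-adjointness theorem (Reed--Simon~II, Thms.~X.37--X.38) from the scalar Schr\"odinger setting to the matrix-valued Pauli operator. Three ingredients are needed: essential self-adjointness of the free Pauli operator on $\core$; construction of a positive comparison operator $N$; and a Nelson-type commutator estimate for $H$ relative to $N$.

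First, I would establish that $\HA$ itself is essentially self-adjoint on $\core$. The essential self-adjointness of $\DA$ on this core recalled in Section~\ref{susy} does not automatically transfer to the square; one invokes Chernoff's theorem on powers of a first-order symmetric hyperbolic operator with smooth coefficients. Finite propagation speed of $\re^{\ri t\DA}$ makes $\core$ a common core for every power of $\overline{\DA}$, and in particular for $\HA = \DA^2$, so $\HA$ is a non-negative self-adjoint operator with $\core$ as an operator core.

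Next, introduce the comparison operator
$$
N := \HA + V_+ + c|\bx|^2 + C_0, \qquad V_\pm(\bx) := \max\bigl(\pm V(\bx),\,0\bigr),
$$
with $C_0 > 0$ chosen so that $N \geq 1$. The perturbation $V_+ + c|\bx|^2 + C_0$ is non-negative, continuous, and in $L^2_{\rm loc}$, so a standard Kato-type argument for non-negative perturbations of a non-negative essentially self-adjoint operator yields that $N$ is essentially self-adjoint on $\core$. The quadratic-form lower bound
$$
\Sps{\phi}{(H + N)\phi} = 2\Sps{\phi}{\HA\phi} + \int (2V_+ - V_- + c|\bx|^2 + C_0)\,|\phi|^2\,\rd\bx \geq 0
$$
on $\core$ then follows from \eqref{growthcond} once $C_0 \geq |d|$ is chosen.

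The principal task is the commutator bound $|\Sps{H\phi}{N\phi} - \Sps{N\phi}{H\phi}| \leq \gamma\,\|N^{1/2}\phi\|^2$ on $\core$. Since $V$, $V_+$, and $|\bx|^2$ pairwise commute, $[H,N] = c[\HA,|\bx|^2] + [\HA, V_-]$. The first summand is a first-order differential expression in $\bx\cdot(-\ri\nabla - {\bf A})$ plus a constant, easily controlled in quadratic form by $\HA^{1/2}|\bx|$ and hence by $N$. The main obstacle is the second summand; using $\HA = \DA^2$ and the antisymmetry of $[\DA, V_-]$ one gets
$$
\Sps{[\HA, V_-]\phi}{\phi} = -2\ri\,\mathrm{Im}\Sps{\DA\phi}{[\DA, V_-]\phi},
$$
with $[\DA, V_-] = -\ri\,{\boldsymbol\sigma}\cdot\nabla V_-$ (pointwise a.e.\ by the $C^1$-hypothesis on $V$). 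Since $|\nabla V|$ is not a priori controlled by the potential, one either absorbs $\|\nabla V_-\cdot\phi\|^2$ into the $\HA$- and $|\bx|^2$-terms of $N$ via Cauchy--Schwarz (using the quadratic growth \eqref{growthcond} as additional input), or approximates by truncations $V_k := \max(V,-k)$, each of which is bounded and yields an essentially self-adjoint $H_k$, and concludes via a strong-resolvent-convergence argument monotone in $k$.
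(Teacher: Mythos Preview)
Your Faris--Lavine strategy has a genuine gap at the commutator step. With your choice $N = \HA + V_+ + c|\bx|^2 + C_0$ one has $[H,N] = [\HA, V_-] + c[\HA, |\bx|^2]$, and the first term contributes $\||\nabla V_-|\,\phi\|$ in form, which must be dominated by $\sps{N\phi}{\phi}$. But \eqref{growthcond} is only a pointwise lower bound on $V$ and says \emph{nothing} about $\nabla V$: take $V(\bx) = -\tfrac{c}{2}|\bx|^2 + \cos(|\bx|^3)$, which satisfies \eqref{growthcond} while $|\nabla V_-(\bx)|$ oscillates up to order $|\bx|^2$, so $|\nabla V_-|^2$ is not bounded by $V_+ + c|\bx|^2$. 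Neither remedy closes this: (a) the quadratic lower bound does not bound the gradient, and (b) the truncations $V_k = \max(V,-k)$ are still unbounded above, so essential self-adjointness of $H_k$ is not automatic, and strong resolvent convergence of self-adjoint $H_k$ does not yield essential self-adjointness of a non-semibounded limit. The way Faris--Lavine avoids $\nabla V$ in the scalar case is to put \emph{all} of $V$ into $N$, i.e.\ $N = H + 2c|\bx|^2 + K$, so that $[H,N] = 2c[\HA,|\bx|^2]$; but then you must first prove that this $N$ --- a Pauli operator with potential bounded below --- is essentially self-adjoint. Your ``standard Kato-type argument for non-negative perturbations'' does not apply: as the paper stresses, the Pauli operator lacks a diamagnetic inequality, so the Simon--Kato machinery for $-\Delta + W$ with $W\ge 0$ does not transfer, and you are left with essentially the same problem you started with.

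The paper's proof (following Iwatsuka) is entirely different and sidesteps both obstacles. It treats the diagonal components $Q_\pm = (-\ri\nabla - {\bf A})^2 \pm B + V$ directly via the deficiency equation $Q_\pm^*\varphi = \pm\ri\varphi$. Elliptic regularity makes $\varphi\in C^2$; writing $Q_\pm$ as a non-negative sesquilinear form through the matrix $\mathbbm{1}\mp\sigma_2$, integrating by parts over balls $B_R$, and taking imaginary parts gives an inequality relating $\int_{B_R}|\varphi|^2$ to boundary terms on $S_R$. The growth condition \eqref{growthcond} enters only via the pointwise bound $V(\bx)/(|\bx|^2+1)\ge -M$, never through $\nabla V$, and a differential-inequality argument for $f(R)=\int_{B_R}(|\bx|^2+1)^{-1}\sum C_{kl}\,\overline{(-\ri\partial_k-A_k)\varphi}\,(-\ri\partial_l-A_l)\varphi$ forces $\varphi\equiv 0$.
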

\begin{remark}\label{relaxregularity}
Following the lines of the proof given in Appendix \ref{selfadj},
we see that the regularity condition on 
$B, V$ can be relaxed to $B,V \in C^\alpha_{loc}(\R^2,\R)$, i.e.
both only need to be locally $\alpha$-H\"older continuous.
By a perturbation argument one can also see that it suffices
to assume that $B,V$ are $C^\alpha_{loc}$ outside some compact set
$K\subset \R^2$, while inside $K$ they only need to be continuous.
\end{remark}
\begin{remark}
The self-adjoint operator given by Proposition \ref{EssSelf} 
is locally compact, i.e. for any characteristic function
$\chi_{B_R(0)}$ on the ball $B_R(0)$ with radius $R$, the operator
$\chi_{B_R(0)}  (H-\ri)^{-1} $
is compact.
\end{remark}
\begin{remark}
Considering the case $V=0$,  we obtain that $\HA$,
$dd^*$ and $d^*d$ are essentially self-adjoint on 
$\core$, respectively on $C_0^\infty(\R^2,\C)$.
\end{remark}
Note that \eqref{growthcond} is the same lower bound on $V$ as one needs
for the (magnetic) Schr\"odinger operator to ensure the essential
self-adjointness, whereas no restriction on the growth of $B$ is necessary.
The regularity conditions on $V$ and ${\bf A}$ are quite strong compared
to those for the magnetic Schr\"odinger operator (see \cite{Reed_Simon_2}). 
The reason is that due to the lack of a diamagnetic inequality for $\HA$,
one uses a direct argument that requires more regularity on the potentials 
${\bf A}$ and $V$. The interesting question remains: Could one 
relax these conditions for the Pauli operator as in the case of the 
magnetic Schr\"odinger operator?
\section{Main results}\label{mainresults}
In this section we assume that $B,V$ and ${\bf A}$ satisfy 
the conditions of Proposition \ref{EssSelf}. It is easy to see
that in the following results 
$B,V \in C^1(\R^2,\R)$ can be relaxed to hold
only outside some compact set $K \subset \R^2$ 
as in Remark \ref{relaxregularity}.
\begin{theorem}\label{thm1}
Assume that
\begin{eqnarray}
&&V(\bx)\longrightarrow
 -\infty\quad\mbox{as}\quad|\bx|\to \infty,\label{t1}\\
&&\left|\frac{\nabla V (\bx)}{V(\bx)}\right| \,
       \longrightarrow 0\quad\mbox{as}\quad|\bx| \to \infty,\label{t2}\\
&&\limsup_{|\bx|\to \infty}
     \left| \frac{V(\bx)}{2B(\bx)} \right| <1. \label{t3}
\end{eqnarray}
Then $\sigma_{\rm ess}(H)=\emptyset$, i.e.
$H$ has purely discrete spectrum.
\end{theorem}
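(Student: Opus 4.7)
The plan is to exploit the block-diagonal structure $H = h_+ \oplus h_-$ on $L^2(\R^2,\C)\oplus L^2(\R^2,\C)$, with $h_+ = d^*d + V$ and $h_- = dd^* + V$, so that $\sigma_{\rm ess}(H) = \sigma_{\rm ess}(h_+) \cup \sigma_{\rm ess}(h_-)$; it therefore suffices to treat each block separately. First I observe that \eqref{t1} and \eqref{t3} together force $|B(\bx)| \to \infty$ as $|\bx| \to \infty$: otherwise some sequence $\bx_n \to \infty$ with $B$ bounded would give $|V(\bx_n)/(2B(\bx_n))| \to \infty$, contradicting \eqref{t3}. For simplicity assume $B > 0$ for $|\bx|$ large (the opposite sign is symmetric, with $h_+$ and $h_-$ interchanged), and fix $\epsilon_0 \in (0,1)$ with $\limsup_{|\bx|\to\infty}|V/(2B)| \leq 1-\epsilon_0$.

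\emph{Lower block.} The commutation relation $[d,d^*]=2B$ together with $d^*d \geq 0$ yields $dd^* \geq 2B$ as quadratic forms, hence for $|\bx|$ large,
\[
h_- \;\geq\; 2B + V \;\geq\; 2\epsilon_0 B \;\longrightarrow\; +\infty,
\]
and Persson's theorem gives $\sigma_{\rm ess}(h_-) = \emptyset$.

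\emph{Upper block.} This is the main obstacle: since $\ker(d)$ is nontrivial and $h_+$ acts on $\ker(d)$ as multiplication by $V \to -\infty$, $h_+$ is not bounded below and the argument above fails. Instead, I would prove that for every $\lambda \in \R$ there exist constants $c_R > 0$ with $c_R \to \infty$ as $R \to \infty$ such that
\[
\|(h_+ - \lambda)\varphi\|^2 \;\geq\; c_R \|\varphi\|^2, \qquad \varphi \in C_0^\infty(\R^2 \setminus \overline{B_R}),
\]
which implies $\lambda \notin \sigma_{\rm ess}(h_+)$ (in fact that $(h_+-\lambda)^{-1}$ is compact). Expanding and integrating $\langle d^*d\varphi,(V-\lambda)\varphi\rangle$ by parts yields
\begin{align*}
\|(h_+-\lambda)\varphi\|^2 = {}&\|d^*d\varphi\|^2 + 2\!\int (V-\lambda)|d\varphi|^2 \\
  &{}+ 2\Re\langle d\varphi,[d,V]\varphi\rangle + \|(V-\lambda)\varphi\|^2,
\end{align*}
with $|[d,V]| \leq |\nabla V|$. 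The crucial tool is the identity $\|d^*d\varphi\|^2 = \|d^2\varphi\|^2 + 2\int B|d\varphi|^2 \geq 2\int B|d\varphi|^2$, obtained from $[d,d^*]=2B$: it lets me control the large negative cross-term $2\int V|d\varphi|^2$ by a fraction of $\|d^*d\varphi\|^2$ using \eqref{t3}. The gradient error is bounded by $\eta(R)\|V\varphi\|\|d\varphi\|$ with $\eta(R)\to 0$ by \eqref{t2}, and absorbed into $\|(V-\lambda)\varphi\|^2 \sim \|V\varphi\|^2$. Since $|V-\lambda| \to \infty$ outside compact sets, $c_R \to \infty$.

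The genuinely hard part is the quantitative bookkeeping. The crude absorption $|2\!\int V|d\varphi|^2| \leq 2(1-\epsilon_0)\|d^*d\varphi\|^2$ (combining the commutator identity with \eqref{t3}) leaves a useful positive remainder only when $\epsilon_0 > 1/2$. For arbitrary $\epsilon_0 > 0$, one has to refine by a further integration by parts,
\[
\int V|d\varphi|^2 = \langle\varphi, V d^*d\varphi\rangle + \langle\varphi, \ri(\partial V) d\varphi\rangle,
\]
and distribute the resulting pieces between $\|d^*d\varphi\|^2$ and $\|V\varphi\|^2 \lesssim \|(V-\lambda)\varphi\|^2$ via Cauchy--Schwarz and AM-GM, keeping every coefficient strictly positive. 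Hypothesis \eqref{t2} enters decisively in this step, ensuring that the $\nabla V$-type remainders do not destroy the lower bound.
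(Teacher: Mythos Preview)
Your treatment of $h_-$ is fine, and you correctly locate the real difficulty in $h_+$ together with the threshold $\epsilon_0>1/2$ for the crude absorption. The gap is in your proposed fix for general $\epsilon_0$. The ``further integration by parts''
\[
\int V\,|d\varphi|^2 \;=\; \langle \varphi,\,V\,d^*d\varphi\rangle \;+\; \langle \varphi,\,[d^*,V]\,d\varphi\rangle
\]
simply undoes the first one: since $[d^*,V]=-[d,V]^*$ the two gradient remainders cancel exactly, and you are back to the bare cross term $2\,\Re\langle V\varphi,\,d^*d\varphi\rangle$. Distributing that by Cauchy--Schwarz/AM--GM gives only
\[
\|h_+\varphi\|^2 \;\ge\; (1-\alpha)\|d^*d\varphi\|^2 + (1-\alpha^{-1})\|V\varphi\|^2,
\]
and no choice of $\alpha>0$ makes both coefficients positive. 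The obstruction is structural: on states close to $\ker d$ one has $h_+\approx V\to-\infty$, while on states far from $\ker d$ one has $h_+\gtrsim 2B+V\to+\infty$; a single quadratic-form inequality on all of $L^2$ cannot capture both regimes once $|V|$ is allowed to exceed $B$.

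The paper resolves this by \emph{separating} the two regimes. It splits $\varphi=P_0\varphi+P_0^\perp\varphi$ with $P_0$ the projection onto $\ker\DA$, and on $P_0^\perp$ uses the supersymmetric partial isometry $s=d(d^*d)^{-1/2}:\ker(d)^\perp\to\ker(d^*)^\perp$ to transport $d^*d$ to $dd^*\ge 2B\ge (1-\eta)^{-1}|V|$. The point is that this transport must also be applied to $V$, and the whole proof hinges on two commutator lemmas (Lemmas~\ref{lemmakey} and~\ref{lemmakey2}) showing that $\|V[P_0^\perp,V^{-1}]\|$ and $\|V[\mathrm{sgn}(\DA)P_0^\perp,V^{-1}]\|$ are $O(\delta^{3/2})$ under \eqref{t2}. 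These are what make the decomposition compatible with multiplication by $V$ and yield $\|H\varphi\|\ge c\|V\varphi\|$ for the full range $\limsup|V/2B|<1$; your direct expansion has no analogue of this step.
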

Condition \eqref{t2} is a restriction on the growth rate of $V$
and rather of technical necessity. 
The interplay between $B$ and $V$ (as mentioned in the
introduction)  is described by condition \eqref{t3}.
Thus, it is worthwhile to investigate further the dependence 
of $\sigma(H)$ on this quotient: 

One can easily observe that if the quotient of \eqref{t3} surpasses
the constant $1$,  the spectrum of $H$ changes its character.
To see  this pick $\Omega \in {\rm ker}(d^*d)$, then 
$$(dd^*+V)\Omega = (d^*d+ 2B +V)\Omega \approx 0$$
if $2B \approx -V$. Therefore, if ${\rm ker}(d^*d)$ contains
enough functions 
(which is the case for fields $B$ bounded form below by some 
positive constant), we obtain points in the essential 
spectrum of $H$. 
One can even show that the condition $2B \approx -V$ (at infinity)
does not need to hold globally for obtaining  
$\sigma_{\rm ess}(H)\neq \emptyset$. We demonstrate this for a certain 
class of fields $B$ and potentials $V$.  
\begin{definition}
A function $f:\R^2\to\R$ varies with rate $\nu \in [0,1]$ on a 
set $X \subset \R^2$ if there is a constant $C>0$ such that 
for all $\bx \in X$ it holds that 
\begin{align*}
  |f(\bx +\by)|\le C |f(\bx)|, 
\end{align*}
whenever $\by \in \R^2$ satisfies $|\by| \le \tfrac{1}{2}|\bx|^\nu$
\end{definition}
Note that functions of the form $f_1(\bx) = c|\bx|^s$ and
$f_2(\bx)= c|x_1|^s$, with $c,s \in \R$, vary with any rate 
$\nu \in [0,1]$ on $\R^2\backslash B_1(0)$ and 
on $\R^2\backslash [-1,1]\times \R$ respectively.
\begin{theorem}\label{thm2}
Assume that there is a sequence
$(\bx_n)_{n\in\N}$ with $|\bx_n|\to\infty$ as $n\to\infty$ 
and constants $k\in \N$, $\varepsilon\in(0,1)$
such that $|\nabla V|, |\nabla B|$ vary with rate $0$
on $(\bx_n)_{n \in \N}$, as well as
\begin{eqnarray}
\label{con0a}
 &&V(\bx_n)\longrightarrow - \infty,\\
\label{con1a}
&& \frac{|\nabla B(\bx_n)|^2}{|B(\bx_n)|^{1-\varepsilon}},
\,\frac{|\nabla V(\bx_n)|^2}{|V(\bx_n)|^{1-\varepsilon}}
\longrightarrow 0,
\\
\label{con2a}
 &&V(\bx_n)+2k|B(\bx_n)|\longrightarrow 0
\end{eqnarray}
as $n \to\infty$. Then $0\in \sigma_{\rm ess}(H)$.
\end{theorem}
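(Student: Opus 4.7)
The plan is to verify $0\in\sigma_{\rm ess}(H)$ via Weyl's criterion by constructing a singular sequence $\{\varphi_n\}$ concentrated near the points $\bx_n$. The guiding heuristic, sketched immediately before the theorem statement, is that on a ball of fixed radius around $\bx_n$ the operator $H$ looks like the Pauli operator with \emph{constant} field $B_n:=B(\bx_n)$ plus the constant potential $V_n:=V(\bx_n)$, whose Landau spectrum is $\{2j|B_n|+V_n:j=0,1,2,\ldots\}$; hypothesis \eqref{con2a} is exactly the statement that the $k$-th such shifted level tends to $0$. Note that \eqref{con0a} together with \eqref{con2a} forces $|B_n|\to\infty$, so the magnetic length $\ell_n:=|B_n|^{-1/2}$ tends to $0$; this is crucial.

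After passing to a subsequence I may assume $|\bx_n-\bx_m|\ge 2$ for $n\ne m$ and that $B(\bx_n)$ has a definite sign. Near each $\bx_n$ I introduce the symmetric-gauge model potential $\tilde A_n(\bx):=\tfrac{B_n}{2}(-(x_2-x_{n,2}),x_1-x_{n,1})$. Writing $\mathbf A=\tilde A_n+\nabla g_n+R_n$ on $B_{1/2}(\bx_n)$ via the Poincar\'e gauge for the residual field $B-B_n$, one obtains $|R_n(\bx)|\le C|\nabla B_n||\bx-\bx_n|^2$ (using rate-$0$ variation of $|\nabla B|$ to bound $|\nabla B|$ uniformly on $B_{1/2}(\bx_n)$ by $C|\nabla B_n|$). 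The gauge factor $e^{-\ri g_n}$ is unitary, so we may suppose $g_n\equiv 0$. Let $H^{(n)}_{\bf A}$ denote the Pauli operator built from $\tilde A_n$, and let $\Omega_{k,n}$ be a normalized $k$-th Landau eigenfunction of $H^{(n)}_{\bf A}$ at eigenvalue $2k|B_n|$ in the appropriate spin component, centered at $\bx_n$ --- an explicit Gaussian times a polynomial of degree at most $k$, of width $\ell_n$. Fix $\chi\in C_0^\infty(B_{1/2}(0))$ with $\chi\equiv 1$ on $B_{1/4}(0)$, set $\chi_n(\bx):=\chi(\bx-\bx_n)$, and put $\varphi_n:=Z_n^{-1}\chi_n\Omega_{k,n}$ with $Z_n$ chosen so that $\|\varphi_n\|=1$; since $\Omega_{k,n}$ is Gaussian-concentrated on scale $\ell_n\ll 1$, $Z_n\to 1$. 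The $\varphi_n$ have pairwise disjoint supports, so $\varphi_n\rightharpoonup 0$.

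It remains to show $\|H\varphi_n\|\to 0$. I would split
\begin{align*}
H\varphi_n=\big(H^{(n)}_{\bf A}+V_n\big)\varphi_n+(V-V_n)\varphi_n-\sigma_3(B-B_n)\varphi_n+\big(\HA-H^{(n)}_{\bf A}+\sigma_3(B-B_n)\big)\varphi_n
\end{align*}
and introduce an intermediate radius $r_n:=\ell_n^{1-\delta}$ with $\delta\in(0,\varepsilon/2)$ small. The first bracket equals $(2k|B_n|+V_n)\varphi_n$ modulo the commutator $[H^{(n)}_{\bf A},\chi_n]\Omega_{k,n}/Z_n$, which is supported in the annulus $\tfrac14\le|\bx-\bx_n|\le\tfrac12$ where $\Omega_{k,n}$ is super-polynomially small in $\ell_n$; hypothesis \eqref{con2a} then handles the leading term. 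For the two Taylor-error terms, outside $B_{r_n}(\bx_n)$ the Gaussian tail of $\Omega_{k,n}$ contributes negligibly, while inside, rate-$0$ variation of $|\nabla V|,|\nabla B|$ combined with \eqref{con1a} gives $\sup_{B_{r_n}(\bx_n)}|V-V_n|\le Cr_n|\nabla V_n|=o(|B_n|^{(\delta-\varepsilon)/2})$ (using $|V_n|\asymp|B_n|$ from \eqref{con2a}), and analogously for $B-B_n$. The hardest term is the vector-potential discrepancy: expanding $(-\ri\nabla-\mathbf A)^2-(-\ri\nabla-\tilde A_n)^2$ produces a first-order cross term linear in $R_n$, its adjoint, and a quadratic term $|R_n|^2$; each must be controlled using $|R_n|=O(|\nabla B_n|r_n^2)$, the Landau bound $\|(-\ri\nabla-\tilde A_n)\Omega_{k,n}\|=O(|B_n|^{1/2})$, and the Gaussian tails beyond $r_n$. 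Verifying that a single $\delta$ can be chosen to make all of these $o(1)$ simultaneously under the one quantitative hypothesis \eqref{con1a} is the main technical obstacle of the proof.
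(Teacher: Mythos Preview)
Your approach is correct and essentially the same as the paper's: both construct a Weyl sequence by cutting off the $k$-th Landau eigenfunction for the frozen constant field $B_n$ at $\bx_n$, then control the errors coming from $V-V_n$, $B-B_n$, the vector-potential discrepancy (Poincar\'e gauge for $B-B_n$), and the localization commutator, using \eqref{con1a} together with rate-$0$ variation and Gaussian decay.

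The one organizational difference is worth noting. You cut off at a \emph{fixed} radius $\tfrac12$ and then introduce an intermediate scale $r_n=\ell_n^{1-\delta}$ to split each error term into an inner region (Taylor estimate) and an outer annulus (Gaussian tail). The paper instead takes the cutoff radius itself to shrink, namely $r_n=B_n^{-(2-\varepsilon)/4}$ (which is exactly your intermediate scale when $\delta=\varepsilon/2$). This buys a cleaner bookkeeping: every error term is supported in $B_{2r_n}(\bx_n)$, so the Taylor bounds $|V-V_n|\le C r_n|\nabla V_n|$, $|B-B_n|\le C r_n|\nabla B_n|$, $|R_n|\le C r_n^2|\nabla B_n|$ hold uniformly on the whole support with no inner/outer split, and the only place the Gaussian tail enters is the commutator $[H_{A_n},\chi_n]\psi_n$, supported on the single annulus $r_n\le|\bx-\bx_n|\le 2r_n$. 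Your fixed-radius version works too, but requires the extra splitting for each term and the separate verification that polynomial-in-$B_n$ prefactors (such as $|\tilde A_n|\sim B_n$ at radius $\tfrac12$) are beaten by the Gaussian; the paper's shrinking cutoff absorbs all of this automatically. Since you already observed that $r_n\to0$ (because $B_n\to\infty$), nothing is lost by letting the cutoff itself sit at $r_n$, and your ``main technical obstacle'' then dissolves into the handful of one-line estimates the paper records.
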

Let us now consider the case $V \gg B$ at infinity. The next 
two theorems state that the accumulation of eigenvalues intensifies, 
creating more points in the essential spectrum and closing spectral gaps.
\begin{theorem}\label{thm3}
Assume that there is a continuous path $\gamma: \R^+ \to \R^2$,
with $|\gamma(t)| \to \infty$ as $t \to \infty$, and constants
$\epsilon > 0$, $\nu \in [0,1]$ such that
$|\nabla V|, |\nabla B|$ vary with rate $\nu$ on 
$\mathrm{Im}(\gamma)$, as well as
\begin{eqnarray}
\label{con3.1}
 && \frac{V(\gamma(t))}{2|B(\gamma(t))|} \longrightarrow -\infty \\
\label{con3.2}
&& \left(
\frac{|\nabla B(\gamma(t))|}{|B(\gamma(t))|} +
\frac{|\nabla V(\gamma(t))|}{|V(\gamma(t))|} \right)
\left(\frac{|V(\gamma(t))|^3}{B^2(\gamma(t))}\right)^{\frac{1+\epsilon}{2}}
\longrightarrow 0,
\\[0.1cm]
\label{con3.3}
 &&\frac{1}{|\gamma(t)|^{2\nu}} 
\left(\frac{|V(\gamma(t))|}{B^2(\gamma(t))}\right)^{1+\epsilon}
\longrightarrow 0
\end{eqnarray}
as $t \to \infty$. In addition, suppose that 
for all $t \in (0,\infty)$ the inequality
  \begin{eqnarray}
   \label{con3.4}
&& B_0\le|B(\gamma(t))|\le 
\alpha \exp \left(\kappa\left| \frac{V(\gamma(t))}{B(\gamma(t))}\right|\right)
\end{eqnarray}
holds with constants $\alpha, \kappa, B_0 >0$. Then $\sigma_{\rm ess}(H) = \R$.
\end{theorem}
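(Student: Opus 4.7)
The plan is to establish that every $\lambda\in\R$ lies in $\sigma_{\rm ess}(H)$ by producing a singular Weyl sequence. This scheme generalises the construction underlying Theorem \ref{thm2} (the case $\lambda=0$): there one fixes an integer $k$ and exploits a sequence $\bx_n$ along which $V(\bx_n)+2k|B(\bx_n)|\to 0$; here, because $V/|2B|\to-\infty$ along $\gamma$, one can let $k$ grow so as to target any prescribed $\lambda$. Concretely, for each large $k\in\N$ the map $t\mapsto V(\gamma(t))+2k|B(\gamma(t))|$ is continuous, tends to $-\infty$ by \eqref{con3.1}, and, thanks to the lower bound $|B|\ge B_0$ from \eqref{con3.4}, exceeds $\lambda$ somewhere once $k$ is large enough. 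The intermediate value theorem then yields $t_k$ with
\begin{equation*}
V(\gamma(t_k))+2k|B(\gamma(t_k))|=\lambda,
\end{equation*}
and setting $\bx_k:=\gamma(t_k)$, the bound $V(\bx_k)\le\lambda-2kB_0$ forces $|\bx_k|\to\infty$.

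Around each base point $\bx_k$ I would freeze the field at $b_k:=B(\bx_k)$, use a local gauge phase $\re^{\ri\theta_k}$ to align ${\bf A}$ with the symmetric gauge $\tilde{\bf A}_k(\bx):=\tfrac{b_k}{2}(\bx-\bx_k)^\perp$ up to a remainder of order $|\nabla B(\bx_k)|\,|\bx-\bx_k|^2$, and take the $k$-th Pauli--Landau eigenfunction $\phi_k$ of the frozen operator $(-\ri\nabla-\tilde{\bf A}_k)^2-\sigma_3 b_k$ in the spinor component dictated by $\mathrm{sgn}(b_k)$. By construction this state, centred at $\bx_k$, has eigenvalue $2k|b_k|$, so that the frozen Pauli operator plus $V(\bx_k)$ acts as $\lambda$ on $\phi_k$; it is localised on the scale $\ell_k\sim\sqrt{|V(\bx_k)|}/|B(\bx_k)|$. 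Picking an intermediate scale $L_k$ with $\ell_k\ll L_k\ll|\bx_k|^\nu$ (possible by \eqref{con3.3}) and a smooth cutoff $\eta_k$ supported in the ball of radius $L_k$ about $\bx_k$, I would take as Weyl candidate
\begin{equation*}
\psi_k:=\re^{\ri\theta_k}\eta_k\phi_k\big/\|\eta_k\phi_k\|.
\end{equation*}

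The computation is to show $\|(H-\lambda)\psi_k\|\to 0$. Expanding about $\bx_k$ produces four remainders: $(V(\bx)-V(\bx_k))\phi_k$, $\sigma_3(b_k-B(\bx))\phi_k$, the kinetic mismatch $\big((-\ri\nabla-{\bf A})^2-(-\ri\nabla-\tilde{\bf A}_k)^2\big)\phi_k$, and the cutoff commutator $[(-\ri\nabla-{\bf A})^2,\eta_k]\phi_k$. The rate-$\nu$ variation of $V,B$ yields pointwise bounds of order $|\nabla V|L_k$, $|\nabla B|L_k$, and $|\nabla B|L_k^2\sqrt{k|b_k|}$ on the first three (using $\|(-\ri\nabla-\tilde{\bf A}_k)\phi_k\|\sim\sqrt{k|b_k|}\sim\sqrt{|V(\bx_k)|}$), and condition \eqref{con3.2} is precisely calibrated to drive these to zero after renormalisation by $\|\eta_k\phi_k\|$. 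The delicate point is the cutoff commutator: since $k\sim|V|/(2|B|)\to\infty$, $\phi_k$ carries a Hermite polynomial of degree $k$ whose tail at radius $L_k$ is of order $L_k^k\exp(-|b_k|L_k^2/4)$, and one must ensure the Gaussian factor dominates on the cutoff scale, so that simultaneously $\|\eta_k\phi_k\|\to 1$ and the commutator error vanishes. This is where the exponential upper bound $|B|\le\alpha\exp(\kappa|V/B|)$ in \eqref{con3.4} enters, and I expect this balance between polynomial growth and Gaussian decay to be the main technical obstacle. Once these estimates are in hand, passage to a subsequence of $(\psi_k)$ with pairwise disjoint supports -- possible because $|\bx_k|\to\infty$ -- yields the desired singular Weyl sequence, and hence $\lambda\in\sigma_{\rm ess}(H)$.
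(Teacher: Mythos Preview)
Your proposal is correct and follows essentially the same route as the paper. The paper makes two cosmetic simplifications you omit: it first reduces to the single target $\lambda=0$ by replacing $V$ with $V-\lambda$ (which preserves \eqref{con3.1}--\eqref{con3.4}), and it fixes the cutoff radius explicitly as $r_n=\sqrt{2n^{1+\epsilon}/B_n}$, i.e.\ the Landau scale $\ell_n$ boosted by $n^{\epsilon/2}$; with this choice the incomplete-gamma tail $\tfrac{1}{n!}\int_{n^{1+\epsilon}}^\infty s^n e^{-s}\,\rd s$ decays super-exponentially in $n$, which via \eqref{con3.4} absorbs the stray powers of $B_n$ in the commutator error exactly as you anticipate.
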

For our main application, potentials of power-like growth 
(see discussion after the next theorem), condition \eqref{con3.3} 
imposes unsatisfying restrictions on the growth rate of $V/B$.
At least in the case of a constant magnetic field they can be 
weakened. 
\begin{theorem}\label{thm4}
Let $B = B_0 > 0$ and $V \in C^2(\R^2,\R)$. Assume that there is a
continuous path $\gamma: \R^+ \to \R^2$, with 
$|\gamma(t)| \to \infty$ as $t \to \infty$, and constants
$\epsilon >0 $, $\nu \in [0,1]$ such that
the matrix norm of the Hessian matrix 
$\|\mathrm{Hess}(V)\|_2 : \R^2 \to \R$
varies with rate $\nu$ on $\mathrm{Im}(\gamma)$, as well as
\begin{eqnarray}
\label{con4.1}
&& V(\gamma(t))\longrightarrow -\infty , \\[1.3 mm]
\label{con4.2}
&&\|\mathrm{Hess}(V)\|_2 (\gamma(t)) |V(\gamma(t))|^{1+\epsilon}
\longrightarrow 0,
\\
\label{con4.3}
 &&\frac{1}{|\gamma(t)|^{2\nu}} \,
|V(\gamma(t))|^{1+\epsilon}
\longrightarrow 0
\end{eqnarray}
as $t \to \infty$. In addition, let 
\begin{eqnarray}
\label{con4.4}
&&\limsup_{t \to \infty} \frac{|\nabla V(\gamma(t))|^2}{|V(\gamma(t))|}
< (2B_0)^2.
\end{eqnarray}
Then $\sigma_{\rm ess}(H) = \R$.
\end{theorem}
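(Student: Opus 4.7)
Fix $\lambda\in\R$. My plan is to construct a singular Weyl sequence $(\psi_n)\subset\mathcal{D}(H)$ with $\|\psi_n\|=1$, $\psi_n\rightharpoonup 0$, and $\|(H-\lambda)\psi_n\|\to 0$, thereby placing $\lambda$ in $\sigma_{\rm ess}(H)$. The core observation is that a Pauli operator with constant magnetic field $B_0$ plus a constant electric field (a \emph{Landau--Stark} operator) has spectrum equal to all of $\R$; approximating $V$ near well-chosen points by its first-order Taylor polynomial and using explicit eigenfunctions of this operator will produce the desired sequence. The crucial flexibility is the choice of centres along $\gamma$: since $V\circ\gamma$ is continuous with $V(\gamma(t))\to-\infty$, for every $\lambda\in\R$ and every sufficiently large $m\in\N$ there exists $\bx\in\mathrm{Im}(\gamma)$ with $V(\bx)=\lambda-2mB_0$. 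Picking $m_n\to\infty$ and corresponding $\bx_n$ forces $|\bx_n|\to\infty$.

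In shifted coordinates $\by=\bx-\bx_n$, a gauge transformation brings ${\bf A}$ into the Landau gauge $(0,B_0 y_1)$, and a rotation places $\nabla V(\bx_n)=(E_n,0)$ with $E_n:=|\nabla V(\bx_n)|$. Taylor expansion gives $V(\bx_n+\by)=V(\bx_n)+E_n y_1+R_V(\by)$, with Hessian remainder $|R_V(\by)|\le\tfrac12\sup_\xi\|\mathrm{Hess}(V)(\xi)\|_2|\by|^2$ along the segment from $\bx_n$ to $\bx_n+\by$. Fourier-transforming the spin-up component of the linearised operator in $y_2$ and completing the square in $y_1$ yields a shifted harmonic oscillator with eigenvalues $2mB_0+V(\bx_n)+E_n\,p/B_0-E_n^2/(4B_0^2)$. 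Choosing $p=p_n:=E_n/(4B_0)$ together with Landau index $m_n$ makes this eigenvalue exactly $\lambda$, with eigenfunction $\Phi_n(\by):=\phi_{m_n}(y_1-y_1^*)\,\re^{\ri p_n y_2}$, centre $y_1^*:=-E_n/(4B_0^2)$, and $\phi_{m_n}$ the $m_n$-th $L^2$-normalised Hermite eigenfunction of $-\partial^2+B_0^2 z^2$; condition \eqref{con4.4} yields the key bound $|y_1^*|\le C\sqrt{|V(\bx_n)|}$. I then set
\[
\psi_n(\bx):=N_n\,\chi(\by/R_n)\,\Phi_n(\by)\,(1,0)^{\rm T},
\]
with $\chi\in C_0^\infty(\R^2)$ a fixed bump of unit-ball support, $R_n:=|V(\bx_n)|^{1/2+\delta}$ for fixed $\delta\in(0,\epsilon/2)$, and $N_n\sim R_n^{-1/2}$ so that $\|\psi_n\|=1$.

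Weak convergence follows from $\mathrm{supp}\,\psi_n\subset B(\bx_n,R_n)$ together with $R_n\ll|\bx_n|$, a consequence of \eqref{con4.3}. Decomposing
\[
(H-\lambda)\psi_n=N_n\,\chi(\by/R_n)\,R_V(\by)\,\Phi_n(1,0)^{\rm T}+N_n\,\bigl[H_{\bf A},\chi(\cdot/R_n)\bigr]\Phi_n(1,0)^{\rm T},
\]
the Hessian term normalises to $\lesssim\|\mathrm{Hess}(V)(\bx_n)\|_2 R_n^2=\|\mathrm{Hess}(V)(\bx_n)\|_2|V(\bx_n)|^{1+2\delta}\to 0$ by \eqref{con4.2} and $2\delta<\epsilon$, where the rate-$\nu$ variation propagates the Hessian bound from $\bx_n$ to all of $\mathrm{supp}\,\psi_n$ because $R_n\le|\bx_n|^\nu/2$ eventually, by \eqref{con4.3}. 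The commutator term, using the Hermite-ladder identities to get $\|(-\ri\nabla-{\bf A})\Phi_n\|_{L^2(\mathrm{supp}\nabla\chi_{R_n})}^2\lesssim m_n B_0\,R_n\sim|V(\bx_n)|R_n$, combined with $\|\nabla\chi(\cdot/R_n)\|_\infty\lesssim 1/R_n$, yields a normalised bound $\lesssim|V(\bx_n)|^{1/2}/R_n=|V(\bx_n)|^{-\delta}\to 0$.

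The technical crux is the simultaneous balancing of $R_n$: it must exceed $\sqrt{m_n/B_0}\sim\sqrt{|V(\bx_n)|}/B_0$ so the Landau ansatz fits inside the cutoff (killing the commutator), satisfy $R_n^2\|\mathrm{Hess}(V)(\bx_n)\|_2\to 0$ so the Taylor remainder is negligible, and obey $R_n\le|\bx_n|^\nu/2$ so the rate-$\nu$ variation transfers from $\bx_n$ to $\mathrm{supp}\,\psi_n$. Conditions \eqref{con4.2}--\eqref{con4.4} are precisely calibrated so that the single choice $R_n=|V(\bx_n)|^{1/2+\delta}$ with $\delta\in(0,\epsilon/2)$ satisfies all three. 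A secondary subtlety is that the use of higher Landau indices $m_n\to\infty$ (necessary because $V(\bx_n)\to-\infty$ while $\lambda$ is fixed) forces the commutator estimates to be carried out via the Hermite-function ladder relations $d\phi_m\sim\sqrt{mB_0}\,\phi_{m-1}$ rather than the cruder ground-state Gaussian bounds, and this is where the $\epsilon$-margin in \eqref{con4.2} is actually spent.
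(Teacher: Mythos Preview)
Your proposal is correct and follows essentially the same route as the paper: rotate to align $\nabla V(\bx_n)$ with $\hat e_1$, pass to Landau gauge, use the explicit shifted-harmonic-oscillator eigenfunctions of the Landau--Stark operator, localize at scale $R_n\sim|V(\bx_n)|^{(1+\epsilon)/2}$, and control the Taylor remainder by \eqref{con4.2} and the cutoff commutator via the ladder relations \eqref{annihilation}--\eqref{creation}. The differences are cosmetic---the paper reduces to $\lambda=0$, uses a product cutoff $\chi_{n,1}(x_1)\chi_{n,2}(x_2)$, and picks $\by_n$ through the implicit relation \eqref{crosscond} so that the Hermite function is centred \emph{exactly} at $\by_n$, whereas you fix $V(\bx_n)=\lambda-2m_nB_0$ and absorb the residual $E_n^2/(4B_0^2)$ into the Fourier parameter $p_n$, incurring the harmless off-centre shift $|y_1^*|\lesssim\sqrt{|V(\bx_n)|}$; one small slip to fix is that the rotation must be performed \emph{before} changing to Landau gauge (the paper uses rotation-invariance of the symmetric gauge, then gauges to Landau), since the Landau gauge itself is not rotation-invariant.
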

\begin{remark}
Note that a well-known, basic example for this last theorem
is the case of a constant electric field $ {\mathcal E}_0$ in 
$x_1$-direction with the corresponding 
potential $V(\bx) = {\mathcal E}_0 x_1$.  
\end{remark}
\begin{remark}
Results similar to that of Theorems \ref{thm1}$-$\ref{thm4} can be 
obtained for the magnetic Schr\"odinger operator with scalar 
potentials $V$ by using the same techniques as in the proofs of 
Theorems \ref{thm1}$-$\ref{thm4}.
\end{remark}
Finally, we want to discuss some consequences of our results,
in particular with respect to spherically symmetric fields $B$ and 
potentials $V$, i.e. $B(\bx) = b(|\bx|)$, $V(\bx) = v(|\bx|)$
for $\bx \in \R$. Using the rotational gauge
\begin{equation*}\label{goischt}
{\bf A}({\bf x}) := \frac{A(r)}{r} 
\begin{pmatrix} -x_2 \\ x_1 \end{pmatrix}\,, \quad
A(r) =  \frac{1}{r} \int_0^r b(s) s \mbox{d}s, 
\end{equation*}
with $r=|\bx|$, we decompose $H$ in a direct sum of
operators on the half-line. More explicitly, there is
a unitary map
$$U: L^2(\R^2,\C^2) \to \bigoplus_{j\in \Z}L^2(\R^+,\C^2; \mbox{d}r)$$
such that 
$ UHU^* =\bigoplus_{j\in \Z} h_j$,
with
\begin{align*}
h_j :=  
\begin{pmatrix}
  -\partial_r^2 + \frac{j^2-1/4}{r^2} & 0 \\
    0    & -\partial_r^2 + \frac{(j+1)^2-1/4}{r^2} 
\end{pmatrix}
 + A^2(r) - \frac{m_j}{r}A(r) + \sigma_3 A'(r)+ v(r) 
\end{align*}
on $L^2(\R^+,\C^2; \mbox{d}r)$, where $m_j = j +\tfrac{1}{2}$ 
(see e.g. \cite{Thaller}).  It is easy  to verify that if
\begin{eqnarray}
\label{condRot.1} 
&& \liminf_{r \to \infty} b(r) > 0, \\
\label{condRot.2} 
&&  A' (r) /A^2(r) \longrightarrow 0 \quad  \mbox{as} \ r \to \infty ,\\[1.5mm]
\label{condRot.3} 
&& \limsup_{r \to \infty} |v(r)|/A^2(r) < 1,
\end{eqnarray}
then $h_j$ has purely discrete spectrum for every $j \in \Z$.
As a consequence, one can use the relations
\begin{align*}
\sigma_{\#}(H)=
\overline{\bigcup_{j\in \Z} \sigma _{\#}{(h_j)}} 
 \,, \quad \quad \quad
\# \in \{ \mathrm{ac},\ \mathrm{sc}, \ \mathrm{pp} \}
\end{align*}
to conclude that $\sigma (H)=\sigma_{\mathrm{pp}}(H)$,
$\sigma_{\mathrm{ac}}(H)= \sigma_{\mathrm{sc}}(H) = \emptyset$
if \eqref{condRot.1}$-$\eqref{condRot.3} are satisfied. 
To get more information on $\sigma (H)=\sigma_{\mathrm{pp}}(H)$, 
we employ Theorems \ref{thm1}$-$\ref{thm4} and obtain:
\begin{corollary}
Let $b(r) = b_0 r^s$, $v(r) =v_0r^t$ with $v_0 < 0 < b_0$
and exponents $0\le s$, $0\le t \le 2$. Then
\begin{itemize}
\item[a)] $\sigma (H)$ is purely discrete if  $0< t < s$ or 
$0< t = s$ and $|v_0| < 2b_0$,
\item[b)] $0 \in \sigma_{\rm ess} (H)$ if $0< t=s$ and $|v_0| = 2kB_0$
for some $k \in \N$,
\item[c)] $\sigma (H) = \R$ is dense pure point 
if $3s<3t < 2(s+1)$,
\item[d)] $\sigma (H) = \R$ is dense pure point 
if $s=0$ and $0< t < 1$.
\end{itemize}
\end{corollary}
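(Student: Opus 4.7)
The plan is to set up the radial decomposition $UHU^* = \bigoplus_{j} h_j$ recalled just before the corollary, verify conditions \eqref{condRot.1}--\eqref{condRot.3} once and for all for the given polynomial $b$ and $v$, and then verify the hypotheses of the appropriate one of Theorems \ref{thm1}--\ref{thm4} in each of the four parts. Using the rotational gauge, $A(r) = b_0 r^{s+1}/(s+2)$, hence $A^2(r) \propto r^{2(s+1)}$ and $A'(r) \propto r^s$, so \eqref{condRot.1} holds because $b_0 > 0$ and $s \ge 0$; \eqref{condRot.2} holds because $A'/A^2 \propto r^{-s-2} \to 0$; and \eqref{condRot.3} holds because $|v|/A^2 \propto r^{t-2(s+1)}$ with $t < 2(s+1)$ in every case of the corollary (this is $t \le s < 2(s+1)$ in (a)--(b), $3t < 2(s+1)$ in (c), and $t < 1 < 2(s+1)$ in (d)). Consequently each $h_j$ has purely discrete spectrum, $\sigma_{\mathrm{ac}}(H) = \sigma_{\mathrm{sc}}(H) = \emptyset$, and the proof reduces to computing $\sigma_{\mathrm{ess}}(H)$ case by case.

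In part (a) I apply Theorem \ref{thm1}: \eqref{t1}--\eqref{t2} are immediate since $t > 0$ and $|\nabla V/V| = t/r$; \eqref{t3} is $|V/(2B)| = (|v_0|/(2b_0))\, r^{t-s}$, which vanishes at infinity if $t<s$ and equals the constant $|v_0|/(2b_0) < 1$ if $t = s$. In part (b) I apply Theorem \ref{thm2} along any sequence $\bx_n \to \infty$ on the $x_1$-axis: rate-$0$ variation of $|\nabla B|$ and $|\nabla V|$ is automatic for power functions, the assumption $|v_0| = 2kb_0$ combined with $t = s$ forces $V + 2k|B| \equiv 0$, and the exponents in \eqref{con1a}, namely $s(1+\varepsilon)-2$ and $t(1+\varepsilon)-2$, are negative for small $\varepsilon > 0$ in the effective range $s = t < 2$.

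For part (c) I apply Theorem \ref{thm3} along the path $\gamma(\tau) = (\tau+1, 0)$ with rate $\nu = 1$. Writing $r = |\gamma(\tau)|$, \eqref{con3.1} follows from $V/(2|B|) = (v_0/(2b_0))\, r^{t-s} \to -\infty$; \eqref{con3.2} reduces, after plugging in $|\nabla B|/|B| + |\nabla V|/|V| = (s+t)/r$ and $|V|^3/B^2 \propto r^{3t-2s}$, to $(3t - 2s)(1+\epsilon) < 2$, which holds for small $\epsilon$ because the hypothesis gives $3t - 2s < 2$; \eqref{con3.3} similarly reduces to $(t-2s)(1+\epsilon) < 2$, automatic since $t - 2s \le t < 2$; and \eqref{con3.4} holds since $|B| \ge b_0$ on $r \ge 1$ while $|V/B| \propto r^{t-s}$ with $t > s$ makes the exponential bound superpolynomial. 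For part (d), with $B \equiv b_0$, I apply Theorem \ref{thm4} along the same path; a direct Cartesian computation gives $\|\mathrm{Hess}(V)\|_2 \propto r^{t-2}$, so that \eqref{con4.1}--\eqref{con4.3} reduce to $t > 0$, $t(2+\epsilon) < 2$, $t(1+\epsilon) < 2$, all of which hold for small $\epsilon$ since $t < 1$, while \eqref{con4.4} follows from $|\nabla V|^2/|V| = |v_0| t^2 r^{t-2} \to 0 < (2b_0)^2$. In (c)--(d) the two conclusions $\sigma_{\mathrm{ess}}(H) = \R$ and $\sigma_{\mathrm{ac}}(H) = \sigma_{\mathrm{sc}}(H) = \emptyset$ together give dense pure point spectrum equal to $\R$.

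The main obstacle is the simultaneous exponent bookkeeping in part (c): the three strict inequalities coming from \eqref{con3.2}, \eqref{con3.3} and \eqref{con3.4} must all be made true by one choice of $\epsilon > 0$, and this is precisely what the two-sided hypothesis $3s < 3t < 2(s+1)$ is tuned to provide.
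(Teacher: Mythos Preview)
Your proposal is correct and follows exactly the route the paper indicates: verify \eqref{condRot.1}--\eqref{condRot.3} once to obtain pure point spectrum, then check the hypotheses of Theorem~\ref{thm1} for (a), Theorem~\ref{thm2} for (b), Theorem~\ref{thm3} for (c), and Theorem~\ref{thm4} for (d). The paper gives no details beyond this outline, so your exponent bookkeeping is precisely what is required; your caveat in (b) that the range is effectively $s=t<2$ is accurate, since \eqref{con1a} genuinely fails at the endpoint $t=2$ and the paper's own theorems do not cover that boundary case either.
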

The origins of the strong restrictions on $s,t$ in c), d)
can easily be tracked back to conditions \eqref{con3.2},
\eqref{con3.3} of Theorem 
\ref{thm3} and \eqref{con4.2} of Theorem \ref{thm4}. Unfortunately,
even in the case of a constant magnetic field ($s = 0$) we cannot 
cover the full range of potentials ($0 < t \le 2$) for which one
might expect $\sigma (H) = \R$. \\ 

\section{Proof of Theorem \ref{thm1}}\label{prooft1}
Note that the assumptions imply that either $B(\bx) \to \infty$ 
or $B(\bx) \to -\infty$. It suffices to consider the case $B(\bx)\to \infty$ 
as $|\bx|\to \infty$ since otherwise we only have to interchange the 
roles of $d$ and $d^*$ in the proof. By modifying $B$ and $V$ on a 
compact set and comparing the corresponding resolvents, we may 
assume that $B$ and $V$ satisfy
\begin{align}
\label{c1} &V(\bx) \le  - 1/\delta,\\
\label{c2} &|\nabla V(\bx)| \le |\delta V(\bx)|, \\
\label{c3} &|V(\bx)|\le 2(1-\eta)B(\bx),
\end{align} 
where $\delta \in (0,\tfrac{1}{4})$ is fixed, but can be chosen arbitrarily
small, and $\eta \in (0,1)$ is a fixed ($\delta$-independent) constant
(c.f. \cite{MehringStock} Appendix B).

Using the commutator relation \eqref{eq:1}, we see that 
\begin{align}\label{commuteddstar}
dd^* \ge 2B \ge (1- \eta)^{-1} |V| \ge (1-\eta)^{-1}\delta^{-1}
\end{align}
on $C^\infty_0(\R^2,\C)$ and therefore on $\mathcal{D}(dd^*)$.
Since $dd^*$ and $d^*d$ are isospectral away from $0$, we
obtain a spectral gap $(0,\beta) \subset \varrho(\HA)$,
with $\beta = (1-\eta)^{-1}\delta^{-1}$. Thus,
$0$ can be regarded as an isolated point of the spectrum,
which is used in the following commutator estimates.
\begin{lemma}\label{lemmakey}
Let $V\in C^1(\R^2,\R)$, $B\in C(\R^2,\R)$ and 
${\bf A}\in C^1(\R^2,\R^2)$ with
  $B=\curla$. Assume further that the
  conditions \eqref{c1}$-$\eqref{c3} are fulfilled for
  $\delta\in(0,\tfrac{1}{4})$ and $\eta\in(0,1)$. Then:
  \begin{itemize}
  \item[a)] The operators $\left[ P_0^\perp,V^{-1} \right] V$,
    $V\left[ P_0^\perp,V^{-1}\right]$ are well-defined on  $\core$ and  
  extend to bounded operators on $\hilbert$ with 
    \begin{align*}
      \left \lVert V\left[P_0^\perp,V^{-1}\right]\right \rVert , 
      \left \lVert \left[P_0^\perp,V^{-1}\right] V\right \rVert
       \le 4 \delta^\frac{3}{2}.
    \end{align*}
The same holds true if we replace $P_0^\perp$ above by $P_0$. 
\item[b)] $P_0\mathcal{D}(V), P_0^\perp \mathcal{D}(V) \subset
  \mathcal{D}(V) $.
  \end{itemize}
\end{lemma}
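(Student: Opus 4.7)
The plan is to exploit the spectral gap $\sigma(\HA) \subset \{0\} \cup [\beta, \infty)$ (with $\beta = (1-\eta)^{-1}\delta^{-1}$, a consequence of \eqref{commuteddstar}) by representing $P_0$ as the Riesz projection
$$
P_0 \;=\; \frac{1}{2\pi\mathrm{i}} \oint_\Gamma R(z)\, \mathrm{d}z, \qquad R(z) := (z - \HA)^{-1},
$$
over the positively oriented circle $\Gamma = \{|z| = \beta/2\}$, and rewriting the commutators of interest as contour integrals. Starting from $[V^{-1}, R(z)] = -R(z)[\HA, V^{-1}]R(z)$ and using the identity $R(z)V = VR(z) - R(z)[\HA, V]R(z)$ to pull $V$ through the right-hand resolvent, combined with the algebraic simplification $[\HA, V^{-1}]V = -V^{-1}[\HA, V]$, one arrives at
$$
[P_0^\perp, V^{-1}]V \;=\; \frac{1}{2\pi\mathrm{i}} \oint_\Gamma R(z)\, V^{-1}[\HA, V]\, R(z)\, \mathrm{d}z \;-\; \frac{1}{2\pi\mathrm{i}} \oint_\Gamma R(z)[\HA, V^{-1}] R(z) [\HA, V] R(z)\, \mathrm{d}z,
$$
with a mirror-image identity for $V[P_0^\perp, V^{-1}]$.

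The key observation is that although $\nabla V$ is unbounded, the composition $V^{-1}[\HA, V]$ admits a form expression involving only the bounded multiplier $W := \nabla V / V$, which by \eqref{c2} satisfies $|W|\leq \delta$ pointwise. Indeed, integration by parts on $\core$ yields
$$
\langle \phi, V^{-1}[\HA, V]\psi\rangle \;=\; -\mathrm{i}\bigl(\langle \lp\phi, W\psi\rangle + \langle W\phi, \lp\psi\rangle\bigr) - \langle \phi, |W|^2\psi\rangle,
$$
so the singular factor $\nabla V$ has been traded for $\lp$ applied to the resolvent. Evaluating the leading contour integral by residues — writing $R(z) = z^{-1}P_0 + R^\perp(z)$ with $R^\perp$ holomorphic inside $\Gamma$ — collapses it to $-P_0\, V^{-1}[\HA, V]\, \HA^{-1}P_0^\perp - \HA^{-1}P_0^\perp\, V^{-1}[\HA, V]\, P_0$, with $\HA^{-1}$ understood on $\mathrm{ran}\, P_0^\perp$. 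The problem is thus reduced to bounding $\lp\,\HA^{-1}P_0^\perp$ (and its dual after an IBP on the $P_0$-side) times $W$.

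Such bounds follow from the supersymmetric structure. With $\lp_1 = (d + d^*)/2$, $\lp_2 = (d - d^*)/(2\mathrm{i})$, and using $\HA R(z) = zR(z) - \mathbbm{1}$ on the relevant spectral subspace, one obtains $\|d R_{\mathrm{upper}}\|, \|d^* R_{\mathrm{lower}}\| \leq 2/\sqrt\beta$ for $z \in \Gamma$. The troublesome asymmetric bounds $\|d^* R_{\mathrm{upper}}\|, \|d R_{\mathrm{lower}}\|$, which naively depend on the unbounded $B$, are avoided by routing each estimate through the opposite spin component via the unitary intertwiner $s \colon \ker(d)^\perp \to \ker(d^*)^\perp$ with $s^* dd^* s = d^*d$. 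Each $\lp$-contribution is then bounded by $\|W\|_\infty \cdot \|\lp\,\HA^{-1}P_0^\perp\| \lesssim \delta \cdot \beta^{-1/2} = \sqrt{1-\eta}\,\delta^{3/2}$; the $|W|^2$-term contributes $O(\delta^2/\beta) = O(\delta^3)$; and the second contour integral, being quadratic in $[\HA, V^{\pm 1}]$, is $O(\delta^{5/2})$. Careful bookkeeping of constants then yields the claimed $4\delta^{3/2}$. Assertion b) is an immediate consequence: for $\psi \in \mathcal{D}(V)$, boundedness of $V[P_0^\perp, V^{-1}]$ from a) gives $V[P_0^\perp, V^{-1}](V\psi) = -[V, P_0]\psi \in L^2$, whence $VP_0\psi = P_0 V\psi + [V, P_0]\psi \in L^2$, i.e., $P_0\psi \in \mathcal{D}(V)$; the inclusion for $P_0^\perp$ follows from $P_0^\perp = \mathbbm{1} - P_0$.

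I expect the main technical obstacle to be precisely this control of $\lp R(z)$: since $\lp^2 = \HA + \sigma_3 B$ and $B$ may grow arbitrarily at infinity, $\|\lp R(z)\|$ is \emph{not} uniformly bounded. Only by exploiting the splitting $\HA = d^*d \oplus dd^*$, the supersymmetric intertwiner $s$, and the asymmetry of the bounds on $\|dR\|, \|d^*R\|$ in the two spin components can one close the estimate — this is precisely where hypothesis \eqref{c3} is used beyond merely opening the spectral gap.
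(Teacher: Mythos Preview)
The paper does not prove this lemma itself but cites \cite{MehringStock}; since that reference treats the Dirac operator, the intended argument almost certainly runs through the first--order operator $\DA$ rather than through $\HA$. Your contour--integral strategy is sound in spirit, but the execution via the $\HA$--resolvent has a real gap. The offending term is your ``second contour integral''
\[
\frac{1}{2\pi\ri}\oint_\Gamma R(z)\,[\HA,V^{-1}]\,R(z)\,[\HA,V]\,R(z)\,\rd z ,
\]
which you dismiss as $O(\delta^{5/2})$. Writing $[\HA,V]=\DA S+S\DA$ with $S=-\ri\,\boldsymbol\sigma\!\cdot\!\nabla V$, every term in the expansion contains a factor $SR(z)$ or $R(z)S$. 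On the upper spin component restricted to $\ker(d)$ the resolvent $R(z)$ is simply $z^{-1}$, while $|\nabla V|$ may grow like $|V|$ and hence like $B$, which is unbounded there; thus $SR(z)$ is \emph{not} a bounded operator, and the asserted $O(\delta^{5/2})$ bound is unjustified. The same obstruction blocks your proposed ``IBP on the $P_0$--side'': integrating $\lp$ off $P_0$ produces ${\rm div}(\nabla V/V)=\Delta V/V-|\nabla V/V|^2$, which needs $V\in C^2$ --- not assumed.

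The fix is to represent $P_0$ as the Riesz projection of $\DA$ rather than of $\HA$, i.e.\ integrate $(z-\DA)^{-1}$ over a circle of radius $\tfrac12\sqrt\beta$. Then only the first--order commutator $[\DA,V^{-1}]$ enters, and with $T:=V^{-1}[\DA,V]=-\ri\,\boldsymbol\sigma\!\cdot\!(\nabla V/V)$ one has $\|T\|\le\delta$. Pulling $V$ through the Dirac resolvent yields the operator identity $V(z-\DA)^{-1}[\DA,V^{-1}](z-\DA)^{-1}=-(1+RT)^{-1}RTR$ with $R=(z-\DA)^{-1}$; since $\|RT\|\le 2\delta/\sqrt\beta=2\sqrt{1-\eta}\,\sqrt\delta<1$, the Neumann series converges and the contour integral is bounded by $2\sqrt{1-\eta}\,\delta^{3/2}/(1-2\sqrt{1-\eta}\,\sqrt\delta)$, giving the claimed order. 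This is also why the supersymmetric structure enters cleanly: the crucial simplifications $P_0\DA=\DA P_0=0$ and $\|\DA R(z)\|\le 2/\sqrt\beta$ are immediate for $\DA$, whereas your detour through $\lp$ and the intertwiner $s$ never closes because $\lp P_0$ is genuinely unbounded.
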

\begin{lemma}\label{lemmakey2} 
  Let $V\in C^1(\R^2,\R)$, $B\in C(\R^2,\R)$ and ${\bf A}\in
  C^1(\R^2,\R^2)$ with $B=\curla$. Assume
  further that the conditions \eqref{c1}$-$\eqref{c3} are fulfilled for
  $\delta\in(0,\tfrac{1}{4})$ and $\eta\in (0,1)$. Then 
$\left[{\rm sgn}(\DA)P_0^\perp,V^{-1} \right]$ maps $\hilbert$ into
 $\mathcal{D}(V)$ and 
\begin{equation}\label{rrr}
\left \lVert V\left[{\rm sgn}(\DA)P_0^\perp,V^{-1} \right] \right
\rVert \leq  4 \delta^\frac{3}{2}.
\end{equation}
\end{lemma}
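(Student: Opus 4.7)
\emph{Plan.} Since $|V|\geq 1/\delta$ by \eqref{c1}, $V^{-1}$ is globally bounded and the commutator $[\SA P_0^\perp, V^{-1}]$ is a priori well-defined on the core $\core$. I will establish the operator-norm bound on this core, whence the mapping property into $\mathcal{D}(V)$ follows by density. Via the block form \eqref{eq:8}, $\SA P_0^\perp$ is block off-diagonal in $\C^2$ with entries $s=(dd^*)^{-1/2}d$ (extended by zero on $\ker d$) and $s^*=(d^*d)^{-1/2}d^*$. Because $V$ is scalar, $V[\SA P_0^\perp, V^{-1}]$ remains off-diagonal, so by symmetry it suffices to prove $\|V[s, V^{-1}]\|\leq 4\delta^{3/2}$.

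The starting point is the algebraic identity
\begin{equation*}
V[s, V^{-1}] \;=\; V\bigl[(dd^*)^{-1/2}, V^{-1}\bigr] d \;+\; V(dd^*)^{-1/2}[d, V^{-1}].
\end{equation*}
For the second summand, $[d,V^{-1}]$ is a multiplication operator bounded pointwise by $|\nabla V^{-1}|=|\nabla V|/V^2$, so $|V\cdot[d,V^{-1}]|\leq|\nabla V|/|V|\leq\delta$ by \eqref{c2}. After commuting $V$ past $(dd^*)^{-1/2}$ (the commutator error is of strictly lower order in $\delta$ and is controlled by the same technique), the leading contribution is bounded by $\|(dd^*)^{-1/2}\pi_*^\perp\|\cdot\delta\leq\sqrt{\delta(1-\eta)}\cdot\delta$, using the spectral gap $dd^*\geq 2B\geq(1-\eta)^{-1}\delta^{-1}$ from \eqref{commuteddstar}. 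This contributes order $\delta^{3/2}$.

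The first summand is the main obstacle. I plan to use the integral representation $(dd^*)^{-1/2}\pi_*^\perp=\tfrac{2}{\pi}\int_0^\infty(dd^*+t^2)^{-1}\,dt$, together with the resolvent identity $[(dd^*+t^2)^{-1},V^{-1}]=-(dd^*+t^2)^{-1}[dd^*,V^{-1}](dd^*+t^2)^{-1}$. A direct computation with $\boldsymbol{\pi}:=-i\nabla-\mathbf{A}$ yields $[dd^*,V^{-1}]=-i(\boldsymbol{\pi}\cdot\nabla V^{-1}+\nabla V^{-1}\cdot\boldsymbol{\pi})$; since $B>0$ at infinity (forced by \eqref{c3}), the identity $\boldsymbol{\pi}^2=dd^*-B\leq dd^*$ gives $\|\boldsymbol{\pi}(dd^*+t^2)^{-1/2}\|\leq 1$. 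Together with $|V\,\nabla V^{-1}|_\infty\leq\delta$, the gap estimate $\|(dd^*+t^2)^{-1}\pi_*^\perp\|\leq(c_0+t^2)^{-1}$ with $c_0=(1-\eta)^{-1}\delta^{-1}$, and the absorption of the trailing factor $d$ into $(dd^*+t^2)^{-1/2}d$ (bounded uniformly in $t$), the $t$-integral converges — in fact to a constant multiple of $\sqrt{c_0^{-1}}=\sqrt{\delta(1-\eta)}$ — and produces a second contribution of order $\delta^{3/2}$.

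Summing and consolidating geometric constants into $4$ yields $\|V[s,V^{-1}]\|\leq 4\delta^{3/2}$; the argument for $s^*$ is symmetric. The main technical hurdle lies in the third paragraph: every time $V$ is commuted past a resolvent $(dd^*+t^2)^{-1}$ or past a $\boldsymbol{\pi}$-factor, an additional commutator is produced, but these are all strictly lower order in $\delta$ — thanks to the same hierarchy $|V|\lesssim B$ from \eqref{c3} — and are absorbed into the final constant $4$.
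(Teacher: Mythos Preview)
The paper itself does not prove this lemma; immediately after stating it, it refers to \cite{MehringStock} for the proofs of both Lemma~\ref{lemmakey} and Lemma~\ref{lemmakey2}. So there is no in-paper argument against which to compare yours directly.

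Your strategy --- reduce to the scalar block $V[s,V^{-1}]$ with $s=(dd^*)^{-1/2}d$, split via the Leibniz rule, and attack $[(dd^*)^{-1/2},V^{-1}]$ through the integral representation $\tfrac{2}{\pi}\int_0^\infty(dd^*+t^2)^{-1}\,dt$ together with the resolvent identity --- is a natural route for such estimates. But the sketch has a genuine gap precisely at what you yourself flag as ``the main technical hurdle''. The assertion that commuting $V$ past a resolvent $R_t=(dd^*+t^2)^{-1}$ or past a factor $\boldsymbol{\pi}$ always produces terms ``strictly lower order in $\delta$'' is not obvious and is not argued. Concretely: to extract the bounded factor $|V\nabla V^{-1}|\leq\delta$ inside $V R_t[dd^*,V^{-1}]R_t d$ you must bring the leftmost $V$ adjacent to $\nabla V^{-1}$, i.e.\ you need control of $VR_tV^{-1}$. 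But $[V,R_t]=-R_t[dd^*,V]R_t$ again contains an \emph{unbounded} $\boldsymbol{\pi}$-factor, so the iteration does not close on bounded operators in any evident way; and the quadratic-form bound $|V|\leq(1-\eta)\,dd^*$ coming from \eqref{c3} does not by itself yield an operator-norm estimate on $VR_t$. The same circularity appears in your treatment of the second summand: ``commuting $V$ past $(dd^*)^{-1/2}$'' with an error ``of strictly lower order'' presupposes exactly the kind of bound on $V[(dd^*)^{-1/2},V^{-1}]$ that the first summand is meant to establish. Without a concrete mechanism to close this loop --- e.g.\ a Neumann-series argument with an explicit contraction, or a reorganisation ensuring every $\boldsymbol{\pi}$ is flanked by a resolvent half-power --- the phrase ``absorbed into the final constant $4$'' is a promise rather than a proof.
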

The proofs of these commutator estimates can be found in
\cite{MehringStock}. Since $\DA$ is a first-order operator,
it is much more convenient to commute $V$ with functions 
of $\DA$ instead of with functions of $\HA$.
For proving Theorem \ref{thm1}, it suffices to find a constant 
$c>0$ such that 
\begin{align}
  \label{eq:2}
  \| H \varphi \|\ge c\|V\varphi\|,\quad \varphi\in \core
\end{align}
holds (see Lemma \ref{Gertrud} in the appendix). 
\begin{proof}[Proof of Theorem \ref{thm1}]
Let $\varphi\in\core$. By Lemma \ref{lemmakey}
we can split $\|H\varphi\|$ as 
\begin{equation*}
\begin{split}
  \left \lVert (\HA+V)\varphi \right \rVert ^2 
&= \left \lVert (\HA+V)(P_0+P_0^\perp)\varphi \right \rVert^2\\
&= \left \lVert \big(VP_0+(\HA+V)P_0^\perp \big)\varphi \right \rVert^2\\
&= \left \lVert(\HA+V)P_0^\perp\varphi \right \rVert^2+
2{\rm Re}
\sps{(\HA+V)P_0^\perp\varphi}{VP_0\varphi}+
\left \lVert VP_0\varphi \right \rVert^2\\
&=\left \|(\HA+V)P_0^\perp\varphi \right \|^2-\delta \left \lVert VP_0^\perp\varphi \right \rVert^2\\
&\hspace{1.37cm}+
2{\rm Re}\sps{VP_0\varphi}{\HA
  P_0^\perp\varphi}+\|V\varphi\|^2-(1-\delta) \left \lVert VP_0^\perp\varphi \right \rVert^2.
\end{split}
\end{equation*}
For the cross-term, condition \eqref{c2} yields
\begin{equation}
  \label{eq:17}
  \begin{split}
|\sps{VP_0\varphi}{\HA
  P_0^\perp\varphi}|&=|\sps{\DA VP_0\varphi}{\DA
  P_0^\perp\varphi}|\\
&= | \sps{ \left[\DA,V \right]V^{-1} V P_0\varphi}{\DA P_0^\perp\varphi}|\\
&\le \tfrac{1}{2}\delta^{-\frac{1}{2}} 
    \| (-\ri{\boldsymbol\sigma} \nabla V) V^{-1} V P_0\varphi\|^2
    + \tfrac{1}{2} \delta^{\frac{1}{2}}
     \| \DA P_0^\perp\varphi \|^2\\
&\le \tfrac{1}{2}\delta^{\frac{3}{2}} 
    \| V P_0\varphi\|^2
    + \tfrac{1}{2} \delta^{\frac{1}{2}}
       \| \HA P_0^\perp\varphi \| \|\varphi\|\\
&\le \tfrac{1}{4}\delta^{\frac{3}{2}} 
    \| \HA P_0^\perp\varphi \|^2
    + \tfrac{1}{4} \delta^{\frac{3}{2}}\big( \|V \varphi\|^2+2\|V P_0\varphi\|^2\big).
   \end{split}
\end{equation}
Applying Lemma \ref{lemmakey} a) results in
\begin{align*}
  \big \lVert VP_0^\perp\varphi \big \rVert , 
  \big \lVert VP_0\varphi \big \rVert      \le  
   \big(1+ 4\delta^\frac{3}{2} \big)\|V\varphi\|,
\end{align*}
and therefore
\begin{align}\label{eq:19}
\begin{split}  
\|V\varphi\|^2-(1-\delta)\left \| VP_0^\perp\varphi \right\|^2 
    - \tfrac{1}{4} \delta^{\frac{3}{2}} \|V \varphi\|^2
   &- \tfrac{1}{2} \delta^{\frac{3}{2}}  \|V P_0 \varphi\|^2 \\
   &\ge \big(\delta- 14\delta^\frac{3}{2}\big) \|V\varphi\|^2.
\end{split}
\end{align}
Because
\begin{equation*}
\begin{split}
\left\|(\HA+V)P_0^\perp\varphi\right\|^2 &- 
\delta^{\frac{3}{2}}  \| \HA P_0^\perp\varphi \|^2 -
\delta\|V P_0 \varphi\|^2 \\
& \ge (1-\varepsilon-\delta^{\frac{3}{2}}) 
\|\HA P_0^\perp\varphi\|^2+(1-\varepsilon^{-1}-\delta)
\|V P_0^\perp\varphi\|^2 
\end{split}
\end{equation*}
for any $\varepsilon\in (0,1)$, it suffices to show,
in view of \eqref{eq:17} and \eqref{eq:19},
that
\begin{align}
  \label{eq:20}
  \sps{\HA P_0^\perp\varphi}{\HA
    P_0^\perp\varphi}+\tfrac{1-\varepsilon^{-1}-\delta}
    {1-\epsilon-\delta^\frac{3}{2}}
     \sps{V P_0^\perp\varphi}{V P_0^\perp\varphi}\ge 0
\end{align}
for $\delta>0$ small enough and some $\varepsilon\in (0,1)$. 
We choose $\epsilon = 1 - \delta^\frac{1}{2}$, then
\begin{align*}
-\frac{1-\epsilon^{-1}-\delta}{1-\epsilon-\delta^\frac{3}{2}} = 
 \frac{1}{1-\delta}\left(\frac{1}{1-\delta^\frac{1}{2}} +\delta^\frac{1}{2}\right) 
  =: c_\delta >0.
\end{align*}
Since $dd^* \ge 2B$ and therefore ${\rm ker}(d^*) = \{0\}$, we have
\begin{align}
  \label{eq:21}
  P_0^\perp =
\begin{pmatrix}
\pi^\perp&0\\
0& \pi_*^\perp
\end{pmatrix} =
\begin{pmatrix}
\pi^\perp&0\\
0& \mathbbm{1}
\end{pmatrix}\!.
\end{align}
Setting
$\varphi=(\varphi_1,\varphi_2)^{\rm T}$, one can rewrite \eqref{eq:20} as
\begin{align*}
  \left \lVert \HA P_0^\perp\varphi \right \rVert^2- 
c_{\delta}\left \lVert V P_0^\perp\varphi \right \rVert^2=
\|d^*d\pi^\perp\varphi_1\|^2- c_{\delta}\|V\pi^\perp\varphi_1\|^2
+\|dd^*\varphi_2\|^2- c_{\delta}\|V\varphi_2\|^2.
\end{align*}
By using the isometries $s,s^*$ given in \eqref{super0}, 
relation \eqref{super} and estimate \eqref{commuteddstar},
one obtains
\begin{align}\nonumber
  \|dd^*\varphi_2\|^2-
  c_{\delta}\|V\varphi_2\|^2
&=\sps{d^*\varphi_2}{d^*dd^*\varphi_2} -
  c_{\delta}\Sps{\sqrt{-V}\varphi_2}{|V|\sqrt{-V}\varphi_2}\\
\nonumber
&=\sps{sd^*\varphi_2}{dd^*sd^*\varphi_2} -
  c_{\delta}\Sps{\sqrt{-V}\varphi_2}{|V|\sqrt{-V}\varphi_2}\\
\nonumber
&\ge\sps{sd^*\varphi_2}{2Bsd^*\varphi_2} -
  c_{\delta}\Sps{d^*\sqrt{-V}\varphi_2}{d^*\sqrt{-V}\varphi_2}\\
\nonumber
&\ge\sps{sd^*\varphi_2}{2Bsd^*\varphi_2} -
  c_{\delta}\Sps{\sqrt{-V}d^*\varphi_2}{\sqrt{-V}d^*\varphi_2}\\
\nonumber & \hspace{3.15cm} 
- c_{\delta}\Sps{\big[d^*,\sqrt{-V}\,\big]\varphi_2}{\sqrt{-V}d^*\varphi_2}\\
\nonumber & \hspace{3.15cm} 
- c_{\delta}\Sps{\sqrt{-V}d^*\varphi_2}{\big[d^*,\sqrt{-V}\,\big]\varphi_2}\\
\nonumber & \hspace{3.15cm} 
- c_{\delta}\Sps{\big[d^*,\sqrt{-V}\,\big]\varphi_2}{\big[d^*,\sqrt{-V}\,\big]\varphi_2}\\
\nonumber
&\ge \big\|\sqrt{2B}sd^*\varphi_2 \big\|^2 -
   c_{\delta}\left(\big\|\big[d^*,\sqrt{-V}\,\big]\varphi_2 \big\|
                             + \big\|\sqrt{-V} d^*\varphi_2 \big\| \right)^2 \\
\nonumber
&\ge \big\|\sqrt{2B}sd^*\varphi_2 \big\|^2 -
       c_{\delta}\left(\delta \big\|sd^*\varphi_2 \big\|
     + \big(1+4\delta^\frac{3}{2}\big)\big\|\sqrt{-V} sd^*\varphi_2 \big\|  \right)^2 \\
\begin{split}\label{ddstar}
&\ge \big[ 1-c_{\delta}(1-\eta)\big(1+15\delta^\frac{3}{2}\big) \big] 
   \big\|\sqrt{2B}sd^*\varphi_2 \big\|^2,
\end{split}
\end{align}
where we applied the bound
$\big \lVert \sqrt{-V} \big[s\pi^\perp,\sqrt{-V}^{-1}\big] \big \rVert
\le 4\delta^2$. 
For the latter we write
\begin{align*}
  \sqrt{-V} \left[{\rm sgn}(\DA)P_0^\perp,\sqrt{-V}^{-1}\right] = 
\left(\begin{array}{cc}
0&\sqrt{-V}\left[s^*,\sqrt{-V}^{-1}\right]\\
\sqrt{-V}\left[s\pi^\perp, \sqrt{-V}^{-1}\right]&0
  \end{array}
\right)
\end{align*}
and therefore, by
Lemma \ref{lemmakey2} with $\sqrt{-V}$ instead of $V$, we get
\begin{align*}
\begin{split}
\label{eq:75} \left \lVert \sqrt{-V} \left[s\pi^\perp,\sqrt{-V}^{-1}\right] \right \rVert
&\le
    \left \lVert \sqrt{-V} \left[{\rm sgn}(\DA)P_0^\perp,\sqrt{-V}^{-1}\right] \right
    \rVert\le 4\delta^\frac{3}{2}.
\end{split}
\end{align*}
\\
Similarly, we obtain a lower bound for
$ \|d^*d\pi^\perp\varphi_1\|^2-c_{\delta}\|V\pi^\perp\varphi_1\|^2$ 
by using again the upper relation of Equation
\eqref{super}. More precisely,
\begin{equation*}\label{dddd}
\begin{split}
  \|d^*d\pi^\perp\varphi_1\|^2- c_{\delta}\|V\pi^\perp\varphi_1\|^2&=
  \|d^*ds^*s\pi^\perp\varphi_1\|^2- c_{\delta}\|Vs^*s\pi^\perp\varphi_1\|^2\\
&= \|dd^*s\pi^\perp\varphi_1\|^2- 
c_{\delta}\|Vs^*V^{-1} Vs\pi^\perp\varphi_1\|^2\\
&=\|dd^*s\pi^\perp\varphi_1\|^2- 
c_{\delta}\big\|\big(s^*+V\left[s^*,V^{-1}\right]\big)Vs\pi^\perp\varphi_1\big\|^2,
\end{split}
\end{equation*}
where 
$V\left[s^*,V^{-1}\right]$
is one of the components of the operator
\begin{align*}
  V \left[{\rm sgn}(\DA)P_0^\perp,V^{-1}\right]= \left(\begin{array}{cc}
0&V\left[s^*,V^{-1}\right]\\
V\left[s\pi^\perp, V^{-1}\right]&0
  \end{array}
\right)\!,
\end{align*}
so Lemma \ref{lemmakey2} yields
$\left \lVert V \left[s^*,V^{-1}\right] \right \rVert \le 4\delta^\frac{3}{2}$.
Thus, 
\begin{align*}
  \|d^*d\pi^\perp\varphi_1\|^2- c_{\delta}\|V\pi^\perp \varphi_1\|^2&
\ge \|dd^*s\pi^\perp\varphi_1\|^2 -
c_{\delta} (1+ 10 \delta^\frac{3}{2}) \big\|Vs\pi^\perp\varphi_1\big\|^2.
\end{align*}
We note that 
$s\pi^\perp\varphi_1 \subset \mathcal{D} (dd^*) 
\subset \mathcal{D}(V)$, hence we can use \eqref{ddstar}
(by approximating $s\pi^\perp\varphi_1$ through 
$C_0^\infty$-functions in the graph norm of $dd^*$)
to conclude that
\begin{align*}
  \|d^*d\pi^\perp\varphi_1\|^2 &- c_{\delta}\|V\pi^\perp \varphi_1\|^2 \\
& \ge \big[ 1-c_{\delta}(1-\eta)\big(1+10\delta^\frac{3}{2}\big) 
\big(1+15\delta^\frac{3}{2}\big)  \big] 
\big\|\sqrt{2B}sd^*s\pi^\perp \varphi_1 \big\|^2.
\end{align*}
Combining this inequality with \eqref{ddstar} leads to 
\begin{align*}
\left \lVert \HA P_0^\perp\varphi \right \rVert^2 & - 
c_{\delta}\left \lVert V P_0^\perp\varphi \right \rVert^2 \\
&\ge
 \big[ 1-c_{\delta}(1-\eta)(1+50\delta^\frac{3}{2})\big] 
\left(  \big\|\sqrt{2B}sd^*\varphi_2 \big\|^2 +
\big\|\sqrt{2B}d\pi^\perp\varphi_1 \big\|^2 \right)\!,
\end{align*}
where the r.h.s is non-negativ for $\delta $ small enough.
\end{proof} 
\section{Proofs of Theorem \ref{thm2} and \ref{thm3}}\label{prooft2}
The basic strategy of the proofs is to represent $B$ and $V$ locally through
constant values $V_n:=V(\bx_n)$ and $B_n:=B(\bx_n)$ along a sequence 
$(\bx_n)_{n\in\N}\subset\R^2$. Since one also needs to compare
vector potentials associated to $B_n$ and $B$, we use the gauges
\begin{align*}
  &{\mathbf A}_n(\bx):=\int_0^1 B_n\wedge (\bx-\bx_n) s \rd
  s=\tfrac{1}{2} B_n\wedge (\bx-\bx_n) ,\\
  &\widetilde{{\mathbf A}}_n(\bx):= \int_0^1
  B(\bx_n+s(\bx-\bx_n))\wedge (\bx-\bx_n) s \rd s,
\end{align*}
where $a\wedge {\mathbf v}:=a(-v_2,v_1)$ for $a\in \R$ and 
${\mathbf v}=(v_1,v_2)\in \R^2$. The two given vector potentials satisfy
${\rm curl}\,{\mathbf A}_n =  {\rm curl}\,\widetilde{\mathbf A}_n =B $,
hence for every $n \in \N$ there exists a function 
$g_n\in C^2(\R^2,\R)$ such that $\nabla
  g_n={\mathbf A}-\widetilde{{\mathbf A}}_n$ .
In addition, for every vector potential ${\mathbf A}_n$, 
representing the constant magnetic fields $B_n$,
we obtain operators $d_n$ and $d_n^*$, $n \in \N$,
 defined as in \eqref{carolina}. 
For a sequence of natural numbers  $(\pn)_{n\in\N}$ 
we set
\begin{align*}
  \psi_n(\bx):=
\begin{pmatrix}
 (d_n^* )^\pn e^{-B_n|\bx-\bx_n|^2/4}\\
 0
\end{pmatrix}\!.
\end{align*}
Iterating commutator relation \eqref{eq:1}  for $d_n,d_n^*$ yields
\begin{align}
  \label{eq:25}
  d_n^* d_n  \big[(d_n^* )^{k_n} e^{-B_n|\bx-\bx_n|^2/4}\big]
  =2k_nB_n \big[(d_n^* )^{k_n} e^{-B_n|\bx-\bx_n|^2/4}\big], 
\quad n \in \N,
\end{align}
i.e. $ \psi_n$ is an eigenfunction of $H_{{\mathbf A}_n}$ with
the corresponding eigenvalue $2\pn B_n$. 
For the localization let $\chi \in C^\infty_0(\R^2,[0,1])$ 
be such that $\chi(\bx)=1$ for $|\bx|\le 1$
and $\chi(\bx)=0$ for $|\bx|\ge 2$. We set
\begin{align*}
  \chi_n(\bx):=\chi\left(\frac{\bx-\bx_n}{r_n}\right)\!,
  \end{align*} 
where the $r_n>0$ will be chosen in the proofs.
For the Weyl sequence we define the functions 
$\varphi_n$ through
\begin{align}
  \label{eq:26}
  \varphi_n(\bx):= e^{\ri g_n (\bx)}\chi_n(\bx) \psi_n(\bx),
  \quad \bx\in\R^2,
\end{align}
with $n \in \N$.  Bounds on the norm of $\varphi_n$ can be obtained
as in \cite{MehringStock}. They are given by:
\begin{lemma}\label{landau}
For all $n\in \N$ large enough we have
\begin{align}\label{eq:29a}
 & \|\varphi_n\|^2 \le \|\psi_n\|^2
= 2\pi \int_0^\infty (B_n r)^{2\pn} e^{-\frac{B_n}{2} r^2} \rd r 
= 2^{\pn+1} \pi  B_n^{\pn-1}\pn! , \\
\label{eq:29}
&\|\varphi_n\|^2\ge
 \|\psi_n\|^2\left(1-\frac{1}{\pn!}
\int_{\frac{1}{2}B_n  r_n^2}^\infty s^{\pn} e^{-s} \rd s\right)\!.
\end{align}
\end{lemma}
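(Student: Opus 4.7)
The plan is to exploit the pointwise factorization $|\varphi_n(\bx)|^2 = \chi_n^2(\bx) \,|\psi_n(\bx)|^2$, which holds because the gauge phase $e^{\ri g_n(\bx)}$ has modulus one and $\chi_n$ is real with values in $[0,1]$. The upper bound \eqref{eq:29a} then follows immediately from $\chi_n^2 \le 1$, while the lower bound \eqref{eq:29} reduces, via
\[
\|\varphi_n\|^2 = \|\psi_n\|^2 - \int_{\R^2} \bigl(1-\chi_n^2(\bx)\bigr)\, |\psi_n(\bx)|^2 \rd\bx,
\]
to controlling the tail $\int_{|\bx-\bx_n|\ge r_n}|\psi_n(\bx)|^2\rd\bx$, since $\chi_n \equiv 1$ on the ball $B_{r_n}(\bx_n)$.

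The central computation is a closed form for $|\psi_n|^2$. In the symmetric gauge ${\bf A}_n(\bx) = \tfrac{1}{2} B_n \wedge (\bx - \bx_n)$, the shifted complex coordinate $z := (x_1 - x_{n,1}) + \ri (x_2 - x_{n,2})$ gives the representation $d_n^* = -2\ri\bigl(\partial_z - \tfrac{B_n}{4} \bar z\bigr)$. Since $\partial_z e^{-B_n |z|^2 / 4} = -\tfrac{B_n}{4} \bar z\, e^{-B_n|z|^2/4}$ and $\partial_z \bar z^k = 0$, a short induction shows
\[
(d_n^*)^{k_n}\, e^{-B_n |\bx - \bx_n|^2/4} = (\ri B_n)^{k_n} \bar z^{k_n}\, e^{-B_n|\bx-\bx_n|^2/4},
\]
whence $|\psi_n(\bx)|^2 = B_n^{2k_n} |\bx - \bx_n|^{2k_n} e^{-B_n|\bx-\bx_n|^2/2}$. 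This density is rotationally symmetric about $\bx_n$, so polar coordinates give $\|\psi_n\|^2 = 2\pi \int_0^\infty (B_n r)^{2k_n} e^{-B_n r^2/2}\, r \rd r$, and the substitution $s = B_n r^2 / 2$ identifies this with $2^{k_n+1}\pi B_n^{k_n-1}\, k_n!$, proving \eqref{eq:29a}.

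For \eqref{eq:29}, the same substitution applied to the tail integral gives
\[
\int_{|\bx-\bx_n|\ge r_n} |\psi_n(\bx)|^2 \rd\bx \;=\; 2\pi \cdot 2^{k_n} B_n^{k_n-1} \int_{\frac{1}{2}B_n r_n^2}^\infty s^{k_n} e^{-s} \rd s \;=\; \frac{\|\psi_n\|^2}{k_n!} \int_{\frac{1}{2}B_n r_n^2}^\infty s^{k_n} e^{-s} \rd s,
\]
and subtracting from $\|\psi_n\|^2$ yields the claimed estimate.

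The only genuinely nontrivial ingredient is the gauge-specific identity for $(d_n^*)^{k_n}$ applied to the Gaussian ground state; once this Landau-level calculation is established, the rest reduces to elementary Gaussian and Gamma-type manipulations. Consistency between the symmetric gauge ${\bf A}_n$ and the definition of $d_n$ is immediate, since ${\bf A}_n$ is introduced precisely with this purpose at the outset of Section \ref{prooft2}.
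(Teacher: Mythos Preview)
Your proof is correct and is precisely the standard Landau-level computation that the paper defers to \cite{MehringStock}; the explicit identity $(d_n^*)^{k_n}e^{-B_n|\bx-\bx_n|^2/4}=(\ri B_n)^{k_n}\bar z^{k_n}e^{-B_n|\bx-\bx_n|^2/4}$ is exactly what underlies the equality in \eqref{eq:29a}, and your Gamma-integral substitution $s=\tfrac{1}{2}B_n r^2$ gives both the closed form and the tail bound \eqref{eq:29}. Note only that the middle expression in \eqref{eq:29a} as printed omits the Jacobian factor $r$ from polar coordinates; your version $2\pi\int_0^\infty (B_nr)^{2k_n}e^{-B_nr^2/2}\,r\,\rd r$ is the one that actually evaluates to $2^{k_n+1}\pi B_n^{k_n-1}k_n!$.
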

Now $H\varphi_n$ can be written as
\begin{equation}\label{weyl}
\begin{split}
 e^{-\ri g_n} (\HA+V)\varphi_n= \ &
(H_{\widetilde{\mathbf A}_n}\!+V) \chi_n \psi_n\\
= \ & (H_{{\mathbf A}_n}\!+V) \chi_n \psi_n+ 
2({\widetilde{\mathbf A}_n} -  {{\mathbf A}_n})
  ( -\ri \nabla -{\mathbf A}_n) \chi_n \psi_n \,+\\
& ({\widetilde{\mathbf A}_n} - {{\mathbf A}_n})^2\chi_n \psi_n 
   -\ri \nabla \cdot( {\widetilde{\mathbf A}_n} 
   - {{\mathbf A}_n})\chi_n\psi_n  \,+\\
& (B-B_n)\chi_n \psi_n  \,,
\end{split}
\end{equation}
with the localization error
\begin{align}
\begin{split} \label{localierror}
(H_{{\mathbf A}_n}\!+V) \chi_n \psi_n -\chi_n(H_{{\mathbf A}_n}\!
&+V)\psi_n \\
&=-(\Delta\, \chi_n)\psi_n + 
2(-\ri\nabla \chi_n)( -\ri \nabla -{\mathbf A}_n)\psi_n.
\end{split}
\end{align}
To prove Theorem \ref{thm2} and Theorem \ref{thm3}, 
we estimate each term of \eqref{weyl} separately. For
the proofs we use the notation 
$K_n := \{ x \in \R^2 \,|\, r_n \le |\bx-\bx_n| \le 2r_n\}$ 
with $n \in \N$.
\begin{proof}[Proof of Theorem \ref{thm2}]
We set $k_n = k$ and choose the radii to be 
$r_n^{-4}=B_n^{(2-\epsilon)}$. Then, for any $p \ge 0$,
$$
\frac{(B_n)^p}{k_n!}
\int_{\frac{1}{2}B_n  r_n^2}^\infty s^{\pn} e^{-s} \rd s
=
\frac{(B_n)^p}{k!}
\int_{\frac{1}{2}B_n^{\epsilon/2} }^\infty s^{k} e^{-s} \rd s
\longrightarrow 0 \quad \mbox{as} \ n \to \infty.
$$
Further, we have 
 $\|\psi_n\|^2 \le 2\|\varphi_n\|^2$ for $n \in \N$ large enough.
For treating the terms on the r.h.s. of \eqref{weyl}, we estimate
\begin{align*}
\ \big\|({\widetilde{\mathbf A}_n}&  
- {{\mathbf A}_n})(-\ri\nabla  -{\mathbf A}_n) \chi_n \psi_n \big\|^2 
\\ \le
&  \ C_1 r_n^4 |\nabla B(\bx_n)|^2  
\big\|( -\ri \nabla -{\mathbf A}_n) \chi_n \psi_n \big\|^2 \\ \le
&  \ 2 C_1 r_n^4 |\nabla B(\bx_n)|^2 \bigg[(2k+1)B_n \| \psi_n \|^2  
 +  r_n^{-2} \|\nabla \chi \|_\infty^2 
  \int_{K_n} |\psi_n(\bx)|^2 \,\rd^2 x \bigg] \\ \le
&  \ 16kC_1 B_n\frac{|\nabla B(\bx_n)|^2}{B_n^{2-\epsilon}} \| \psi_n \|^2  
    +   4C_1\|\nabla \chi \|_\infty^2 \| \psi_n \|^2 B_n^{(2-\epsilon)/2}
      \frac{1}{k!}\int_{\frac{1}{2}B_n^{\epsilon/2}}^\infty s^ke^{-s}\rd s .
\end{align*}
In addition,
\begin{align*}
\|({\widetilde{\mathbf A}_n} -  {{\mathbf A}_n})^2 \chi_n \psi_n \|^2 
&\le C_2 r_n^4|\nabla B(\bx_n)|^4 \|\psi_n \|^2
\le C_2\frac{|\nabla B(\bx_n)|^4}{B_n^{2(1-\epsilon)}} 
B_n^{-\epsilon}\|\psi_n \|^2,\\
\| \mathrm{div\,} ( {\widetilde{\mathbf A}_n} - {{\mathbf A}_n}) \chi_n \psi_n  \|^2 
&= \| \mathrm{div\,}{\widetilde{\mathbf A}_n} \chi_n \psi_n  \|^2
\le C_3r_n^2 |\nabla B(\bx_n)|^2 \|\psi_n \|^2 \\
&\hspace{2.85cm}\le C_3 \frac{|\nabla B(\bx_n)|^2}{B_n^{1-\epsilon}}  
B_n^{-\epsilon/2} \|\psi_n \|^2,\\
\|(B-B_n)\chi_n \psi_n  \|^2 
&\le C_4r_n^2 |\nabla B(\bx_n)|^2 \|\psi_n \|^2
\le C_4\frac{|\nabla B(\bx_n)|^2}{B_n^{1-\epsilon}}  B_n^{-\epsilon/2}\|\psi_n \|^2.
\end{align*}
For the first term of the r.h.s of \eqref{weyl} we get, 
due to \eqref{localierror},
\begin{align*}
\|(H_{{\mathbf A}_n}\!+V) &\chi_n \psi_n \| \\
&\le \|\chi_n(H_{{\mathbf A}_n}\!+V) \psi_n \|
+ \|(\Delta\, \chi_n)\psi_n\|  + 
2\|(-\ri\nabla \chi_n)( -\ri\nabla -{\mathbf A}_n)\psi_n\|,
\end{align*}
with 
\begin{align*}
\|(\Delta\, \chi_n)\psi_n\|^2  
& \le  r_n^{-4} \|\Delta \chi \|_\infty^2 
\int_{K_n} |\psi_n(\bx)|^2 \,\rd^2 x \\
& \le 2\|\Delta \chi \|_\infty^2 \| \psi_n \|^2\frac{1}{k!}
    B_n^{2-\epsilon}\int_{\frac{1}{2}B_n^{\epsilon/2}}^\infty s^ke^{-s}\rd s, 
\end{align*}
and
\begin{align*}
\|(-\ri\nabla \chi_n)( -\ri\nabla -{\mathbf A}_n)\psi_n\|^2
&\le r_n^{-2} \|\nabla \chi \|_\infty^2 
\int_{K_n} \big|(-\ri\nabla -{\mathbf A}_n)\psi_n(\bx)\big|^2 \rd^2 x \\
&\le \|\nabla \chi \|_\infty^2 B_n^{(2-\epsilon)/2}(2k+1)B_n
\int_{K_n} \big|\psi_n(\bx)\big|^2 \rd^2 x \\
&\le \|\nabla \chi \|_\infty^2\| \psi_n \|^2 (2k+1) B_n^{2-\epsilon/2} 
\int_{\frac{1}{2}B_n^{\epsilon/2}}^\infty s^ke^{-s}\rd s. 
\end{align*}
Because of \eqref{eq:25} and since $|\nabla V|$ vary with rate $0$, 
we conclude by the mean value theorem that
\begin{align*}
\|\chi_n(H_{{\mathbf A}_n}\!+V) \psi_n \|^2 
&\le \|\chi_n(V+2kB_n) \psi_n \|^2 \\[0.2cm]
&\le C_5 |\nabla V(\bx_n)|^2 r_n^2 \|\chi_n \psi_n \|^2 + 
  (2kB_n + V_n)^2 \|\chi_n \psi_n \|^2 \\
&\le C_5 (4k)^{1-\epsilon} \frac{|\nabla V(\bx_n)|^2}{|V_n|^{1-\epsilon}}\|\varphi_n \|^2\
+ (2kB_n + V_n)^2 \|\varphi_n \|^2\,.
\end{align*}
Hence, by \eqref{weyl} and conditions \eqref{con0a}$-$\eqref{con2a},
we see that $ \|(\HA+V)\varphi_n\|/ \|\varphi_n\| \to 0 $ 
as $n \to \infty$. In addition, note that $r_n \to 0$ as $n \to \infty$, 
so we can assume that the $\varphi_n$ have mutually disjoint support,
i.e. $(\varphi_n)_{n \in \N}$ is a Weyl sequence for $0$. 
\end{proof}
\begin{proof}[Proof of Theorem \ref{thm3}]
We first note that it suffices to proof  $0 \in \sigma_{\rm{ess}}(H)$ since for 
$E \in \R$ we consider $V_E := V-E$ instead of $V$, which also fulfills 
\eqref{con3.1}$-$\eqref{con3.4} along $\gamma$. Because
$\R^+ \ni t \mapsto V(\gamma(t))/B((\gamma(t))$ is continuous and
\eqref{con3.1} holds, we find points $(\bx_n)_{n \in \N} \subset \rm{Im}(\gamma)$,
with $|\bx_n| \to \infty$ as $n \to \infty$, such that $2nB(\bx_n)=-V(\bx_n)$. 
We choose $\pn = n$ and set 
 \begin{align}
  \label{eq:rad2} r_n := \sqrt{2n^{1+\epsilon}/B_n} .
 \end{align}
Note that $r_n/|\bx_n|^\nu \to 0$ as $n \to \infty$ by \eqref{con3.3}.
In particular, we might assume the $\varphi_n$'s to have mutually
disjoint support.
Further, for any $\lambda \ge 0$,  
$$\frac{e^{\lambda n}}{n!}\int_{n^{1+\epsilon}}^\infty s^n e^{-s} \rd s 
  \le  e^{\lambda n} \exp(n\ln(2n)-n^{1+\epsilon}/2 +n)\longrightarrow 0
 \quad \mbox{as} \ n \to \infty.$$
Hence, we can choose $N \in \N $ so large that 
$\|\varphi_n\|^2 \le \|\psi_n\|^2 \le 2 \|\varphi_n\|^2$ for $n \ge N$. Proceeding as
in the proof of Theorem \ref{thm2},  we obtain
\begin{align*}
\|({\widetilde{\mathbf A}_n} &-  {{\mathbf A}_n})(-\ri \nabla  -{\mathbf A}_n) 
  \chi_n \psi_n \|^2 \\ \le
& \ C_6 r_n^4 |\nabla B(\bx_n)|^2 (2n+1)B_n \bigg[\|\psi_n \|^2  
  + r_n^{-2} \|\nabla \chi \|_\infty^2 
   \int_{K_n} |\psi_n(\bx)|^2 \rd^2 x \bigg] \\ \le
& \ 16C_6n^{3+2\epsilon} B_n\frac{|\nabla B(\bx_n)|^2}{B_n^2} \| \psi_n \|^2  
 +   C_6\|\nabla \chi \|_\infty^2 \| \psi_n \|^2
    B_n\frac{1}{n!}\int_{n^{1+\epsilon}}^\infty s^ne^{-s}\rd s \\ \le
& \ \widetilde C_6\left(\frac{|V_n|^3}{B_n^2}\right)^{1+\epsilon}
       \frac{|\nabla B(\bx_n)|^2}{B_n^2} \| \psi_n \|^2  
  + \alpha C_6\|\nabla \chi \|_\infty^2 \| \psi_n \|^2
     \frac{e^{2\kappa n}}{n!}\int_{n^{1+\epsilon}}^\infty s^ne^{-s}\rd s.
\end{align*}
Using (as in the first inequality above) that $|\nabla B|$ vary with 
rate $\nu$, we conclude
\begin{align*}
\|({\widetilde{\mathbf A}_n} -  {{\mathbf A}_n})^2 \chi_n \psi_n \|^2 
&\le C_7 r_n^4|\nabla B(\bx_n)|^4 \|\psi_n \|^2
\le \frac{C_7}{4}\frac{|V_n|^6}{B_n^4}\frac{|\nabla B(\bx_n)|^4}{B_n^{4}}
\|\psi_n \|^2, \\
\|{\rm div}\, ( {\widetilde{\mathbf A}_n} - {{\mathbf A}_n}) 
\chi_n \psi_n  \|^2 
&\le C_8r_n^2 |\nabla B(\bx_n)|^2 \|\psi_n \|^2 
\le \frac{C_8}{2}\frac{|V_n|^3}{B_n^2}\frac{|\nabla B(\bx_n)|^2}{B_n^2}  \|\psi_n \|^2 ,
\end{align*}
\begin{align*}
\hspace{1.17cm}\|(B-B_n)\chi_n \psi_n  \|^2 
\le C_9r_n^2 |\nabla B(\bx_n)|^2 \|\psi_n \|^2
\le \frac{C_9}{2}\frac{|V_n|^3}{B_n^2}\frac{|\nabla B(\bx_n)|^2}{B_n^2}
\|\psi_n \|^2. \hspace{0.71cm}
\end{align*}
We estimate, using equality \eqref{localierror},
the first term on the r.h.s. of \eqref{weyl} by  
\begin{align*}
\|(H_{{\mathbf A}_n}\!+V) &\chi_n \psi_n \| \\ 
&\le \|(V-V_n)\chi_n \psi_n \| + \|(\Delta\, \chi_n)\psi_n\|  
+ 2\|(-\ri\nabla \chi_n)( -\ri \nabla - {\mathbf A}_n)\psi_n\|,
\end{align*}
with 
\begin{align*}
\|(\Delta\, \chi_n)\psi_n\|^2  
& \le \|\Delta \chi \|_\infty^2 \| \psi_n \|^2 \frac{B_n^2}{n^{2+2\epsilon}}
  \frac{1}{n!}\int_{n^{1+\epsilon}}^\infty s^ne^{-s}\rd s \\
& \le \alpha^2 \|\Delta \chi \|_\infty^2 \| \psi_n \|^2 
 \frac{e^{4\kappa n}}{n^{2+2\epsilon}}\frac{1}{n!}
\int_{n^{1+\epsilon}}^\infty s^n e^{-s}\rd s,
\end{align*}
and 
\begin{align*}
\|(-\ri\nabla \chi_n)( -\ri\nabla -{\mathbf A}_n)\psi_n\|^2
&\le \|\nabla \chi \|_\infty^2 \frac{B_n}{n^{1+\epsilon}}
\int_{K_n}  (2n+1)B_n|\psi_n(\bx)|^2 \rd^2 x \\
&\le \alpha\|\nabla \chi \|_\infty^2\| \psi_n \|^2 
 \frac{e^{2\kappa n}}{n^{1+\epsilon}}\frac{1}{n!}
\int_{n^{1+\epsilon}}^\infty s^ne^{-s}\rd s.
\end{align*}
Since $|\nabla V|$ vary with rate $\nu$, we have
\begin{align*}
\|\chi_n(H_{{\mathbf A}_n}\!+V) \psi_n \|^2 
&\le \|\chi_n(V-V_n) \psi_n \|^2 \\
&\le C_{10} |\nabla V(\bx_n)|^2 r_n^2 \|\psi_n \|^2 
\le \widetilde{C}_{10} \left(\frac{|V_n|^3}{B_n^2}\right)^{1+\epsilon}
 \frac{|\nabla V(\bx_n)|^2}{|V_n|^2}\|\psi_n \|^2.
\end{align*}
We see that $\|(H_{{\mathbf A}_n}\!+V) \varphi_n \|/ \|\varphi_n \| \to 0 $
as $n \to \infty$ and therefore, by \eqref{weyl} and the estimates
above, that $\|(\HA+V) \varphi_n \|/ \|\varphi_n \| \to 0 $  as 
$n \to \infty$.
\end{proof}
\section{Proof of Theorem \ref{thm4}}\label{prooft4}
Throughout this section we consider the case of a constant 
magnetic field $B(\bx) = B_0$. In addition, we assume that 
${\bf A}$ is in the rotational gauge, i.e.
$${\bf A}(\bx) = \frac{B_0}{2}
\begin{pmatrix}
-x_2 \\ x_1
\end{pmatrix}\!.
$$
Note that $\HA$ is invariant under rotations. More precisely,
for a special orthogonal matrix ${\mathcal R} \in SO(2,\R)$ 
define the unitary map
\begin{align*}
U_{\mathcal R}  : \hilbert \to \hilbert,  \quad \quad \quad
\psi(\,\cdot\,) \mapsto \psi({\mathcal R}^{-1} \,\cdot\,) ,
\end{align*}
then $U_{\mathcal R}^{-1}\HA U_{\mathcal R} = \HA$ and therefore
\begin{align*}
U_{\mathcal R}^{-1} (\HA+V) U_{\mathcal R} = \HA + V_{\mathcal R} \quad \mbox{with} \
V_{\mathcal R}(\,\cdot\,) = V({\mathcal R} \,\cdot\,) .
\end{align*}
To construct a Weyl sequence, consider a second gauge 
$\widetilde{\bf A}(\bx) = B_0 x_1 \hat e_2 $, called the 
Landau gauge.  Then our Hamiltonian reads
\begin{align}\label{HwcE}
\begin{split}
H_{\widetilde{\bf A}} +V 
&= -\partial_1^2 + (-\ri \partial_2 -B_0x_1)^2 
- \sigma_3 B_0+V \\ 
&={\tilde d}^{\,*} {\tilde d} +B_0 - \sigma_3 B_0+V  ,
\end{split}
\end{align}
with
\begin{align*}
{\tilde d} =
-\ri \partial_1 + \ri (-\ri\partial_2 -B_0x_1) , \quad
{\tilde d}^{\,*} =
-\ri \partial_1 - \ri (-\ri\partial_2 -B_0x_1) .
\end{align*}
For electric fields of the form $V(\bx) = V_0 + {\mathcal E}_0 (x_1-\zeta)$, 
with constants $V_0, {\mathcal E}_0, \zeta \in \R$, we can write
\begin{align}
\begin{split}
H_{\widetilde{\bf A}} +V
&= -\partial_1^2 + (-\ri \partial_2 -B_0x_1)^2 - 
           \sigma_3 B_0+V_0 + {\mathcal E}_0 (x_1- \zeta) \\
& = -\partial_1^2 + 
           B_0^2\big(x_1 -\tfrac{1}{B_0}\big(-\ri \partial_2 - 
           \tfrac{ {\mathcal E}_0}{2B_0}\big)\big)^2 \\
&\hspace{1.11cm}
+ \tfrac{ {\mathcal E}_0}{B_0} \big( -\ri\partial_2
-\tfrac{ {\mathcal E}_0}{2B_0}\big) 
- {\mathcal E}_0 \zeta - \sigma_3 B_0+V_0 
+\big(\tfrac{ {\mathcal E}_0}{2B_0}\big)^2.
\end{split}
\end{align}
Performing a Fourier transform in $x_2$, we obtain the direct 
integral representation
\begin{align*}
H_{\widetilde{\bf A}} +V
 \cong \int_\R^\oplus h(\xi) \mathrm{d} \xi
\end{align*}
on $L^2(\R_\xi, L^2(\R,\C^2))$, with
\begin{align*}
 h(\xi) &=
-\partial_1^2 + 
B_0^2\big(x_1 -\tfrac{1}{B_0}\big( \xi -\tfrac{ {\mathcal E}_0}{2B_0}\big)\big)^2 + 
\tfrac{ {\mathcal E}_0}{B_0} \big( \xi-\tfrac{ {\mathcal E}_0}{2B_0}\big) 
 -  {\mathcal E}_0 \zeta - \sigma_3 B_0+V_0
 +\big(\tfrac{ {\mathcal E}_0}{2B_0}\big)^2 \\ 
&=
-\partial_1^2 + 
B_0^2\big(x_1 -\hat\zeta \big)^2 + 
 {\mathcal E}_0\hat \zeta - {\mathcal E}_0 \zeta - \sigma_3 B_0+V_0
 +\big(\tfrac{ {\mathcal E}_0}{2B_0}\big)^2 .
\end{align*}
Here we set $\hat \zeta = 
\tfrac{1}{B_0}\big( \xi -\tfrac{ {\mathcal E}_0}{2B_0}\big)$.
Note that  $h(\xi)$ is the Hamiltonian of a shifted 
harmonic oscillator. Thus, we define for $n \in \N_0$
$$\phi_n(x) = 
\frac{1}{\sqrt{2^nn!\sqrt{\pi} }}\, \vartheta_n(x)e^{-x^2/2}, 
\quad x \in \R,$$
where $\vartheta_n$ denotes the $n-$th Hermite polynomial. 
The normalized functions
\begin{align*}
{\widehat \psi}_{{\mathcal E}_0, n,\xi}(x_1) :=
\sqrt[4] {B_0}
\begin{pmatrix}
\phi_n\big(\sqrt{B_0}\big(x_1- \tfrac{1}{B_0}\big( \xi
-\tfrac{ {\mathcal E}_0}{2B_0}\big)\big)  \\
0
\end{pmatrix}
\end{align*}
fulfill the equation
\begin{align*}
h(\xi) {\widehat \psi}_{{\mathcal E}_0, n, \xi}=
\Big(2nB_0 + {\mathcal E}_0 \big( 
\tfrac{1}{B_0}\big( \xi-\tfrac{ {\mathcal E}_0}{2B_0}\big) -\zeta\big)
+ V_0+ \big(\tfrac{ {\mathcal E}_0}{2B_0}\big)^2 \Big)
{\widehat \psi}_{{\mathcal E}_0, n, \xi}\,.
\end{align*}
Hence,
\begin{align}\label{monika}
{\psi}_{{\mathcal E}_0, n,\xi}(x_1, x_2) :=
e^{\ri \xi x_2} {\widehat \psi}_{{\mathcal E}_0, n, \xi} (x_1) 
\end{align}
satisfies
\begin{align}\label{dominika}
\begin{split}
\big[H_{\widetilde{\bf A}} +V\big]{\psi}_{{\mathcal E}_0, n, \xi} = 
\Big(2nB_0 + {\mathcal E}_0 \big( 
\tfrac{1}{B_0}\big( \xi-\tfrac{ {\mathcal E}_0}{2B_0}\big) -\zeta\big) +
 V_0+ \big(\tfrac{ {\mathcal E}_0}{2B_0}\big)^2 \Big)
{ \psi}_{{\mathcal E}_0, n, \xi} 
\end{split}
\end{align}
for $\xi \in \R$ and $n \in \N_0$, seen as a differential equation.
In addition, we have
\begin{align}\label{annihilation}
{\tilde d} {\psi}_{{\mathcal E}_0, n,\xi}  & =
-\ri \sqrt{2nB_0} {\psi}_{{\mathcal E}_0, n-1,\xi}
+\ri\tfrac{ {\mathcal E}_0}{2B_0}{\psi}_{{\mathcal E}_0, n,\xi} \,,\\
\label{creation}
{\tilde d}^{\,*} {\psi}_{{\mathcal E}_0, n,\xi}   & =
\ri \sqrt{2(n+1)B_0} {\psi}_{{\mathcal E}_0, n+1,\xi}
-\ri\tfrac{ {\mathcal E}_0}{2B_0}{\psi}_{{\mathcal E}_0, n,\xi} \,.
\end{align}
\begin{proof}[Proof of Theorem \ref{thm4}]
As argumented in the proof of Theorem \ref{thm3}, it  
suffices to find a Weyl sequence for $E =0$. Because of
\eqref{con4.1} and \eqref{con4.4}, there exists a sequence
$\{ \by_n\}_{n \in \N} \subset \mbox{Im}(\gamma)$ such that
\begin{align}\label{crosscond}
V(\by_n) = -2nB_0 - \left(\tfrac{|\nabla V(\by_n)|}{2B_0}\right)^2.
\end{align}
Further, there are rotations ${\mathcal R}_n \in SO(2,\R)$ such that
$\nabla V_{{\mathcal R}_n}(\bx_n) 
= |\nabla V_{{\mathcal R}_n}(\bx_n)|\hat e_1$,  with
$\bx_n = {\mathcal R}_n^{-1}\by_n =(x_{n,1}, x_{n,2})^T$ 
for $n \in \N$. We set
\begin{align}\label{Vn}
\hspace{1.0cm}
&V_n := V(\by_n) = V_{{\mathcal R}_n} (\bx_n) ,\\
\label{En}
&{\mathcal E}_n := |\nabla V(\by_n)| = 
|\nabla V_{{\mathcal R}_n} (\bx_n)|  ,
\end{align}
\begin{align}\label{Xin}
\xi_n := B_0x_{n,1} + \tfrac{ {\mathcal E}_n}{2B_0} .
\hspace{0.71cm}
\end{align}
For the Weyl functions let $\chi \in C_0^\infty(\R, [0, 1])$ with
$\chi(x) =1$ for $|x| \le 1$ and $\chi(x) =0$ for $|x| \ge 2$. Define
\begin{align*}
\chi_{n,j}(x) := \chi \left( \tfrac{x -x_{n,j}}{r_n} \right)\!,
\end{align*} 
for $j = 1,2 $, and
\begin{align}\nonumber
\begin{split}
\varphi_n(\bx) &:= 
\chi_{n,1}(x_1)
\chi_{n,2}(x_2)
{\psi}_{{\mathcal E}_n, n,\xi_n}(x_1, x_2)
\\ & \ =
\chi\left( \tfrac{x_2 -x_{n,2}}{r_n}\right)
e^{-\ri \xi_n x_2} 
\chi\left( \tfrac{x_1 -x_{n,1}}{r_n}\right)
\begin{pmatrix}
\sqrt[4]{B_0}\phi_n\big(\sqrt{B_0}(x_1-x_{n,1})\big)
\\ 0
\end{pmatrix}\!,
\end{split}
\end{align}
where the localization radii $r_n$ are chosen to be 
$r_n := \sqrt{n^{1+\epsilon}/B_0}$.
Note that
\begin{align}\label{normthm4}
r_n \le 
2 r_n \int_{-\sqrt{n^{1+\epsilon}}}^{\sqrt{n^{1+\epsilon}}} |\phi_n(x)|^2 \rd x
\le \  \|\varphi_n\|^2 \ \le 
4 r_n \int_{-2\sqrt{n^{1+\epsilon}}}^{2\sqrt{n^{1+\epsilon}}} |\phi_n(x)|^2 \rd x 
\le 4 r_n
\end{align}
for $n \in \N$ large enough 
(see Lemma \ref{hermiteestimate} in the appendix).
By denoting $g(\bx) = \tfrac{B_0}{2}x_1x_2$ for $\bx \in \R^2$, we get, 
due to \eqref{HwcE}, \eqref{dominika}, \eqref{crosscond} and\eqref{Xin},
that
\begin{align}\label{applyphi}
\begin{split}
H U_{{\mathcal R}_n} e^{-\ri g} \varphi_n = &
\ U_{{\mathcal R}_n} e^{-\ri g} 
\big[H_{\widetilde{\bf A}} +V_{{\mathcal R}_n}  \big] \varphi_n 
\\ = &\
U_{{\mathcal R}_n} e^{-\ri g}\big(
{\tilde d}^{\,*}{\tilde d} \varphi_n -
\chi_{n,1}\chi_{n,2}{\tilde d}^{\,*}{\tilde d}
 {\psi}_{{\mathcal E}_n, n,\xi_n}\big) \ + \\  & \
U_{{\mathcal R}_n} e^{-\ri g} 
\big[V_{{\mathcal R}_n} - V_n  -
{\mathcal E}_n(x_1-x_{1,n})\big] \varphi_n .
\end{split}
\end{align}
The localization error results in  
\begin{align*}
\begin{split}
{\tilde d}^{\,*}{\tilde d} \varphi_n -
\chi_{n,1}\chi_{n,2}{\tilde d}^{\,*}{\tilde d}
{\psi}_{{\mathcal E}_n, n,\xi_n}  \ = \
&\big[-\ri\chi_{n,2} \partial_1\chi_{n,1} +
\chi_{n,1}\partial_2\chi_{n,2}\big]
{\tilde d}^{\,*} {\psi}_{{\mathcal E}_n, n,\xi_n} \ + \\
&\big[-\ri\chi_{n,2} \partial_1\chi_{n,1} -
\chi_{n,1}\partial_2\chi_{n,2}\big]
{\tilde d}{\psi}_{{\mathcal E}_n, n,\xi_n} \ + \\
&\big[-\chi_{n,2} \partial_1^2\chi_{n,1} -\chi_{n,1} \partial_2^2\chi_{n,2}\big]
{\psi}_{{\mathcal E}_n, n,\xi_n} ,
\end{split}
\end{align*}
with, using \eqref{annihilation} and \eqref{creation},
\begin{align*}
\big\|
\big[-\ri\chi_{n,2} \partial_1&\chi_{n,1} +
\chi_{n,1}\partial_2\chi_{n,2}\big]
{\tilde d}^{\,*}  {\psi}_{{\mathcal E}_n, n,\xi_n} \big \|  \\
&\le  \sqrt{2(n+1)B_0}\big\|
\big[-\ri\chi_{n,2} \partial_1\chi_{n,1} +
\chi_{n,1}\partial_2\chi_{n,2}\big] {\psi}_{{\mathcal E}_n, n+1,\xi_n}
\big \|  \\ & \hspace{1.52cm}+  
\tfrac{ {\mathcal E}_n}{2B_0}\big\|
\big[-\ri\chi_{n,2} \partial_1\chi_{n,1} +
\chi_{n,1}\partial_2\chi_{n,2}\big]{\psi}_{{\mathcal E}_n, n,\xi_n}
\big \| \\ &\le
2\sqrt{2(n+1)B_0} r_n^{-1}\|\chi'\|_\infty \sqrt{2r_n}
  \|\phi_{n+1}\| +
2\tfrac{ {\mathcal E}_n}{2B_0}
 r_n^{-1}\|\chi'\|_\infty \sqrt{2r_n} \|\phi_n\| \\
&\le 2\sqrt{2} \|\chi'\|_\infty 
\Big( B_0\sqrt{\tfrac{2n+2}{n^{1+\epsilon}}} +    
\tfrac{{\mathcal E}_n}{2}
\sqrt{\tfrac{B_0}{n^{1+\epsilon}}} \Big) \sqrt{r_n} \,,
\end{align*}
\begin{align*}
\big\|
\big[-\ri\chi_{n,2} \partial_1\chi_{n,1} -
\chi_{n,1}\partial_2&\chi_{n,2}\big]
{\tilde d}  {\psi}_{{\mathcal E}_n, n,\xi_n} \big \|   \\
&\le  \sqrt{2nB_0}\big\|
\big[-\ri\chi_{n,2} \partial_1\chi_{n,1} -
\chi_{n,1}\partial_2\chi_{n,2}\big] {\psi}_{{\mathcal E}_n, n-1,\xi_n}
\big \|  \hspace{0.27cm}
\\ & \hspace{0.63cm}+  
\tfrac{ {\mathcal E}_n}{2B_0}\big\|
\big[-\ri\chi_{n,2} \partial_1\chi_{n,1} -
\chi_{n,1}\partial_2\chi_{n,2}\big]{\psi}_{{\mathcal E}_n, n,\xi_n}
\big \| \\ &\le
2\sqrt{2} \|\chi'\|_\infty 
\Big( B_0\sqrt{\tfrac{2n}{n^{1+\epsilon}}} +    
\tfrac{{\mathcal E}_n}{2}
\sqrt{\tfrac{B_0}{n^{1+\epsilon}}} \Big) \sqrt{r_n} 
\end{align*}
and 
\begin{align*}
\big\|
\big[-\chi_{n,2} \partial_1^2\chi_{n,1} -\chi_{n,1} \partial_2^2\chi_{n,2}\big]
{\psi}_{{\mathcal E}_n, n,\xi_n}
\big \| \le
2 \sqrt{2}\|\chi''\|_\infty r_n^{-2} \sqrt{r_n} \,. 
\end{align*}
Thus, in view of condition \eqref{con4.4} and estimate
\eqref{normthm4}, we get
\begin{align}\label{normlocalierror} 
\big\|
{\tilde d}^{\,*}{\tilde d} \varphi_n -
\chi_{n,1}\chi_{n,2}{\tilde d}^{\,*}{\tilde d}
{\psi}_{{\mathcal E}_n, n,\xi_n} \big\| \big/
\|\varphi_n\| \longrightarrow 0
\quad \mathrm{as} \ n \to \infty .
\end{align}

For estimating the remaining term on the r.h.s of \eqref{applyphi},
we expand $V_{{\mathcal R}_n}$ up to second order and obtain, 
by \eqref{Vn} and \eqref{En}, that
\begin{align*}
\big| \big[ V_{{\mathcal R}_n}(\bx) - V_n  -
{\mathcal E}_n(x_1-x_{1,n})\big] \varphi_n(\bx) \big|
& \le
\big\| \mathrm{Hess} (V_{{\mathcal R}_n}) \big\|_2
(\boldsymbol{\eta_{x,x_n}})
|\bx-\bx_n|^2 | \varphi_n(\bx)|,
\end{align*}
with $\boldsymbol{\eta_{x,x_n}} \in [\bx, \bx_n]$.
Because ${\mathcal R}_n$ are rotations, we have that
$\| \mathrm{Hess} (V_{{\mathcal R}_n})\|_2(\,\cdot\,)  = 
\|\mathrm{Hess} (V)\|_2({\mathcal R}_n\,\cdot\,)$
for $n \in \N$. Since
$\|\mathrm{Hess} (V)\|_2$  
varies with rate $\nu$ along $\mathrm{Im}(\gamma)$ and 
since, by \eqref{con4.3} and \eqref{crosscond}, 
$r_n/|\bx_n|^\nu \to 0$  as $n \to \infty$, 
we find a constant $C_{11}>0$ such that
for $n \in \N$ large enough
\begin{align*}
\| \mathrm{Hess} (V_{{\mathcal R}_n})\|_2(\boldsymbol{\eta})
\le C_{11} \| \mathrm{Hess} (V_{{\mathcal R}_n})\|_2(\bx_n) , \quad
 \boldsymbol{\eta} \in B_{2r_n}(\bx_n) 
\end{align*}
holds. As a consequence,
\begin{align*}
\big\| U_{{\mathcal R}_n} e^{-\ri g} 
\big[V_{{\mathcal R}_n} - V_n  -
{\mathcal E}_n(x_1-x_{1,n})\big] \varphi_n \big\|
&\le 4C_{11}r_n^2
\| \mathrm{Hess} (V)\|_2({\mathcal R}_n \bx_n) 
\|\varphi_n \| \\ & \le
4C_{11} \| \mathrm{Hess} (V)\|_2(\by_n) 
\left(\tfrac{|V_n|}{B_0} \right)^{1+\epsilon} 
\|\varphi_n \| 
\end{align*}
for $n \in \N$ large enough.
In view of \eqref{con4.2},  we conclude that 
$(U_{{\mathcal R}_n} e^{-\ri g}\varphi_n)_{n \in \N}$ 
is a Weyl sequence for $0$.
\end{proof}
\noindent
{\bf Acknowledgments.}
This work has been supported by SFB-TR12
"Symmetries and Universality in Mesoscopic Systems" of the DFG. 
The author also wants to thank Edgardo Stockmeyer for useful 
discussions and remarks as well as the
{\it Faculdad de F\'isica de la Pontificia Universidad Cat\'olica de Chile}
for the great hospitality.
\begin{appendix}
\section{Essential self-adjointness of the Pauli operator}\label{selfadj}
In this section we recapitulate an argument, originally given 
in \cite{Iwatsuka1990}, for proving the essential self-adjointness of
the Pauli operator. As we will see, this argumentation works also for 
the relaxed regularity conditions on $B$ and $V$ of Proposition \ref{EssSelf}.

For the proof we first note that for $\varphi \in C_0^\infty(\R^2,\C)$
we can write 
$$[(-\ri \nabla -{\bf A})^2 + B]\varphi = 
   \sum_{k,l = 1}^2 (-\ri \partial_k -A_k)\overline{C_{k,l}}(-\ri \partial_l -A_l)\varphi\,,$$
$$[(-\ri \nabla-{\bf A})^2 - B]\varphi = 
   \sum_{k,l = 1}^2 (-\ri \partial_k -A_k)C_{k,l}(-\ri \partial_l -A_l)\varphi\,,$$
where $C_{k,l}$ denote the coefficients of the symmetric non-negativ definite matrix
$$C  = \mathbbm{1}-\sigma_2  =
\begin{pmatrix}
1&\ri\\
-\ri & 1  
  \end{pmatrix} = C^*.$$
Furthermore, along the proof we use the notation
$B_R:= \{\bx \in \R^2 |\ |\bx| \le R\}$ and 
$S_R := \{\bx \in \R^2 |\ |\bx| = R\} $.
\begin{proof}[Proof of Proposition \ref{EssSelf}]
 Since $H$ is a diagonal matrix operator, it suffices to show that 
both operators on the diagonal,
$$Q_{\pm}:= [(-\ri \nabla -{\bf A})^2 \pm B +V],$$
are essentially self-adjoint on $C_0^\infty(\R^2,\C)$. Because
$Q_\pm$ are symmetric on $C_0^\infty(\R^2,\C) $,
we have to show $Q_+^*\varphi= \pm \ri \varphi $
implies $\varphi \equiv 0$ for $\varphi \in \mathcal{D}(Q_+^*)$,
respectively $Q_-^*\varphi= \pm \ri \varphi $
implies $\varphi \equiv 0$ for $\varphi \in \mathcal{D}(Q_-^*)$.
We only treat the case $Q_-^*\varphi= \ri \varphi $ since the 
others are completely analogous. 
Let $\varphi \in \mathcal{D}(Q_-^*)$ be such that 
$Q_-^*\varphi= \ri \varphi$, then
\begin{equation} \label{eq:dis} 
[(-\ri \nabla-{\bf A})^2 - B + V]\varphi = \ri \varphi
\end{equation}
holds in distributional sense. Due to elliptic regularity theory 
(see e.g. \cite{Gilbarg_Trudinger}, \cite{Hellwig}), we obtain 
that $\varphi \in C^2(\R^2,\C)$ and that \eqref{eq:dis} holds 
strongly. Applying integration by parts results in
\begin{align} \label{eq:dis7}
\begin{split}
 & \int_{B_R}\bigg[\sum_{k,l = 1}^2 (-\ri \partial_k -A_k)C_{k,l}
(-\ri \partial_l -A_l)\varphi\bigg] \bar \varphi\,  \rd^2x 
\\ & \hspace{2.63cm}
=  \int_{B_R}\bigg[\sum_{k,l = 1}^2 (-\ri \partial_k -A_k)\varphi C_{k,l} 
\overline{(-\ri \partial_l -A_l)\varphi}\bigg] \,  \rd^2x  
\\ &  \hspace{4.15cm}
+ \ri \int_{S_R}\bigg[\sum_{k,l = 1}^2 \bold \nu_k C_{k,l}
(-\ri \partial_l -A_l)\varphi\bigg] \bar \varphi\,  \rd S, 
\end{split}
\end{align}
with $R>0$, where  $\bold \nu_k(\bx) = x_k/|\bx|$ for $k= 1,2$.
By taking the imaginary part of \eqref{eq:dis7},
we conclude with \eqref{eq:dis} that
$$\int_{B_R} |\varphi|^2 \, \rd^2 x = 
\int_{S_R}\bigg[\sum_{k,l = 1}^2 \bold \nu_k C_{k,l}
(-\ri \partial_l -A_l)\varphi\bigg] \bar \varphi\,  \mbox{d}S$$
for any $R>0$. The Cauchy-Schwarz inequality yields
$$\int_{B_R} |\varphi|^2 \, \rd^2 x \le \!
\bigg(\int_{S_R}\sum_{k,l = 1}^2 (-\ri \partial_k -A_k)\varphi C_{k,l} 
\overline{(-\ri \partial_l -A_l)\varphi}\,\mbox{d}S\bigg)^{1/2} \!
\left(\int_{S_R} |\varphi|^2  \mbox{d}S\right)^{1/2}\!. $$
Hence, it suffices to show that 
\begin{equation} \label{eq:dis5}
\int_{\R^2}\sum_{k,l = 1}^2 \frac{(-\ri \partial_k -A_k)\varphi C_{k,l} 
\overline{(-\ri \partial_l -A_l)\varphi}}{|\bx|^2+1}\,\rd^2 x \ <\infty 
\end{equation}
since this implies that
$(1,\infty) \ni r  \mapsto r^{-1}\int_{B_r} |\varphi|^2 \, \mbox{d}\bx $
is an $L^1$-function, i.e. $\varphi \equiv 0$. \\ For \eqref{eq:dis5}
we consider the function
$$f(R) :=
\int_{B_R}\sum_{k,l = 1}^2 \frac{(-\ri \partial_k -A_k)\varphi C_{k,l} 
\overline{(-\ri \partial_l -A_l)\varphi}}{|\bx|^2+1} \,\rd^2 x\,, $$
with $R>0$. Using Equation \eqref{eq:dis} and integration by parts, we
obtain, with $\zeta(\bx) = ( |\bx|^2+1)^{-1}$ and $M \ge c+ |d|$, that
\begin{align*}
 f(R) - M\|\varphi\|^2   
&\le  f(R) + \int_{B_R}  \zeta V |\varphi|^2\,  \rd^2 x  \\
& =  \int_{B_R} \zeta (Q_-^*\varphi) \bar{\varphi}  \, \rd^2 x  -  
 \ri \int_{B_R} \bigg[\sum_{k,l = 1}^2 (\partial_l\zeta)
C_{k,l}(-\ri \partial_k -A_k)\varphi \bigg] \bar \varphi\, \,\rd^2x \\
&\hspace{3.19cm} +  \ri \int_{S_R} \zeta \bigg[\sum_{k,l = 1}^2 \bold\nu_l C_{k,l}
(-\ri \partial_k -A_k)\varphi
 \bigg] \bar \varphi\, \,\mbox{d}S .
\end{align*}
By the estimates
\begin{align*}
\bigg| \int_{B_R} \bigg[\sum_{k,l = 1}^2 (\partial_l\zeta)C_{k,l} 
& (-\ri \partial_k  -A_k)  \varphi \bigg] \bar \varphi\, \,\rd^2 x \bigg| \\
& \le \int_{B_R} 2\zeta^{1/2} \bigg|\sum_{k,l = 1}^2 
\bold \nu_l C_{k,l}(-\ri \partial_k -A_k)\varphi \bigg| |\varphi|\,\rd^2 x \\ 
& \le  2\int_{B_R}\bigg[\sum_{k,l = 1}^2 \zeta(-\ri \partial_k -A_k)\varphi C_{k,l} 
 \overline{(-\ri \partial_l -A_l)\varphi}\bigg]^{1/2} |\varphi| \,\rd^2 x \\
& \le  2[f(R)]^{1/2}\|\varphi\| \le \frac{1}{2}f(R) + 2 \|\varphi\|^2
\end{align*}
and
\begin{align*}
  \bigg|\int_{S_R}&\zeta\bigg[\sum_{k,l = 1}^2\bold\nu_k C_{k,l} 
(-\ri \partial_l - A_l)\varphi\bigg] \bar \varphi\,  \rd S \bigg| 
\\ & \le 
\int_{S_R}\zeta\bigg[\sum_{k,l = 1}^2 (-\ri \partial_k -A_k)\varphi C_{k,l} 
\overline{(-\ri \partial_l -A_l)\varphi}\bigg]^{1/2} |\varphi|\, \rd S 
\\ & \le
\bigg(\int_{S_R}\sum_{k,l = 1}^2 \zeta(-\ri \partial_k -A_k)\varphi C_{k,l} 
\overline{(-\ri \partial_l -A_l)\varphi}\,\rd S\bigg)^{1/2} \!
\left(\int_{S_R} |\varphi|^2  \rd S\right)^{1/2} \\ & =
\left({f'(R)} \,\int_{S_R} |\varphi|^2  \rd S\right)^{1/2} ,
\end{align*}
we conclude that
$$f(R) \le 2(3+c) \|\varphi\|^2+ 
2\left({f'(R)} \,\int_{S_R} |\varphi|^2\rd S\right)^{1/2}\!.  
$$
If $f(R)= 0$ for all $R>0$, then clearly \eqref{eq:dis5} holds. If 
$f(R_0) > 0$ for some $R_0>0$, then $f(R) > 0$ for all $R>R_0$ and 
$f'(R)/f^2(R) \in L^1((R_0, \infty))$, implying that
$$\left(\frac{f'(R)}{f^2(R)} \,
\int_{S_R} |\varphi|^2  \mbox{d}S\right)^{1/2} \ \in L^1((R_0, \infty)).$$
Hence, there exists a sequence $(R_n)_{n\in \N} \subset (0,\infty)$ such that
$R_n \to \infty$ as $n \to \infty$ and
$$\left({f'(R_n)}\,
\int_{S_{R_n}} |\varphi|^2  \mbox{d}S\right)^{1/2}  \le \frac{1}{4} \, f(R_n).$$
Therefore, we have $f(R_n) \le 4(3+c) \|\varphi\|^2$ for all $n\in\N$, 
which implies  
\eqref{eq:dis5} since $f(R)$ is a monotonically increasing function.
\end{proof}
\section{Remarks on locally compact operators}
\begin{lemma}\label{Gertrud}
Let $A$ be a locally compact, self-adjoint operator on $L^2(\R^n,\C^m)$ 
with $n,m \ge 1$. Assume there is a function
$W \in L_{loc}^\infty(\R^n, [0,\infty))$, 
with $W(\bx) \to \infty$ as $|\bx| \to\infty$,
such that
$$\| A \varphi\| \ge \|W\varphi\| \quad \mbox{for} \ 
\varphi \in \mathcal{D}(A). $$
Then $\sigma_{{\rm ess}}(A) = \emptyset$, i.e.
$A$ has only discrete spectrum.
\end{lemma}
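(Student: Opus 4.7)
The plan is to show that the resolvent $(A-\mathrm{i})^{-1}$ is compact on $L^2(\R^n,\C^m)$. Since $A$ is self-adjoint, compactness of its resolvent is well known to be equivalent to $\sigma_{\rm ess}(A)=\emptyset$, so this reduces the lemma to a compactness statement that can be proved by splitting the resolvent into a local part and a tail.

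First, I would decompose
\begin{equation*}
(A-\mathrm{i})^{-1} = \chi_{B_R(0)}(A-\mathrm{i})^{-1} + \big(\mathbbm{1}-\chi_{B_R(0)}\big)(A-\mathrm{i})^{-1}
\end{equation*}
for arbitrary $R>0$. The first summand is compact by the hypothesis that $A$ is locally compact. The task is therefore to show that the second summand tends to $0$ in operator norm as $R\to\infty$, which together with the fact that the set of compact operators is closed in the operator norm topology will yield compactness of $(A-\mathrm{i})^{-1}$.

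For the norm estimate, I would take $\psi\in L^2(\R^n,\C^m)$ and set $\varphi:=(A-\mathrm{i})^{-1}\psi\in\mathcal{D}(A)$. Self-adjointness yields the standard identity
\begin{equation*}
\|\psi\|^2 = \|(A-\mathrm{i})\varphi\|^2 = \|A\varphi\|^2 + \|\varphi\|^2,
\end{equation*}
and the hypothesis $\|A\varphi\|\ge\|W\varphi\|$ gives $\|W\varphi\|\le\|\psi\|$. Now, since $W(\bx)\to\infty$ as $|\bx|\to\infty$, for any $\varepsilon>0$ I can choose $R$ so large that $W(\bx)\ge 1/\varepsilon$ on $\{|\bx|\ge R\}$, whence
\begin{equation*}
\big\|(\mathbbm{1}-\chi_{B_R(0)})\varphi\big\|^2 \le \varepsilon^2\int_{|\bx|\ge R}|W(\bx)\varphi(\bx)|^2\,\mathrm{d}\bx \le \varepsilon^2\|\psi\|^2.
\end{equation*}
This gives $\|(\mathbbm{1}-\chi_{B_R(0)})(A-\mathrm{i})^{-1}\|\le\varepsilon$, so the tail vanishes in norm as $R\to\infty$.

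There is no serious obstacle here: the crux is simply translating ``$W\to\infty$ at infinity'' into an $L^\infty$ bound on the multiplier $1/W$ outside a large ball, then using the a priori estimate $\|W\varphi\|\le\|\psi\|$ to control the tail of $\varphi$. Once $(A-\mathrm{i})^{-1}$ is shown to be a norm limit of the compact operators $\chi_{B_R(0)}(A-\mathrm{i})^{-1}$, it is itself compact, and the spectral theorem for self-adjoint operators with compact resolvent produces a pure-point spectrum consisting of eigenvalues of finite multiplicity accumulating only at infinity, i.e.\ $\sigma_{\rm ess}(A)=\emptyset$.
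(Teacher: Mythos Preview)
Your proof is correct and takes a genuinely different route from the paper. The paper argues by contradiction via the Weyl criterion: assuming $\lambda\in\sigma_{\rm ess}(A)$, it picks a singular sequence $(\varphi_n)$, uses local compactness to show $\|\chi_{B_R(0)}\varphi_n\|\to 0$, and then combines this with the bound $\|A\varphi_n\|\ge\|W\varphi_n\|$ to force $\|(A-\lambda)\varphi_n\|$ to stay bounded away from $0$. Your argument is more direct: you prove that $(A-\mathrm{i})^{-1}$ is compact by approximating it in norm by the compact operators $\chi_{B_R(0)}(A-\mathrm{i})^{-1}$, the tail being controlled via the identity $\|A\varphi\|^2+\|\varphi\|^2=\|(A-\mathrm{i})\varphi\|^2$ together with $\|W\varphi\|\le\|A\varphi\|$. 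Your approach has the advantage of yielding the slightly stronger conclusion (compact resolvent) explicitly and avoiding the Weyl-sequence machinery; the paper's argument, on the other hand, never needs the Pythagorean identity and works directly at the level of the essential spectrum.
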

\begin{proof}
Assume $\lambda \in \sigma_{{\rm ess}}(A) \subset \R$. Then there is 
a normalized sequence $(\varphi_n)_{n \in\N} \subset \mathcal{D}(A)$
such that
$\varphi_n \rightharpoonup 0$ 
and $\|(A-\lambda)\varphi_n\| \to 0$
as $n \to \infty$. Let $R> 0$ be a fixed constant.  We have 
\begin{align*}
\| \chi_R \varphi_n \| &=
\| \chi_R (A-\ri - \lambda)^{-1}  (A-\ri - \lambda) \varphi_n \| 
\\ & \le
 \| \chi_R (A-\ri - \lambda)^{-1}\| 
 \| (A- \lambda) \varphi_n \| +
 \| \chi_R (A-\ri - \lambda)^{-1} \varphi_n \| \,,
\end{align*}
using the notation $\chi_R := \chi_{B_R(0)}$. Since 
$ \chi_R (A-\ri -\lambda)^{-1}$ is compact, this inequality implies 
that $\|\chi_R\varphi_n\|\to 0$ as $n \to \infty$. Let $ R>0$ be so 
large that $ W(\bx) \ge 5|\lambda| + 1 $ if $|\bx| \ge R$.
Choosing $N \in \N$ large enough,  we can estimate, for $n \ge N$,
\begin{align*}
\| (A-\lambda) \varphi_n\|  & \ge
\|W\varphi_n\| - |\lambda| \\ & \ge
\|W(\mathbbm{1}-\chi_R) \varphi_n\| - 
\|W\chi_R\|_\infty \|\chi_R\varphi_n\| -|\lambda| \\ & \ge
(5|\lambda|+1)\|(\mathbbm{1}-\chi_R)\varphi_n\|-
\|W\chi_R\|_\infty \|\chi_R\varphi_n\| -|\lambda| \\ & \ge
(|\lambda| +1/2) - \|W\chi_R\|_\infty \|\chi_R\varphi_n\|\,. 
\end{align*}
Hence, $\|(A-\lambda)\varphi_n\| \nrightarrow 0$ as $n \to \infty$,
which is a contradiction.
\end{proof}
\section{Integral estimates}
\begin{lemma}\label{hermiteestimate}
 Let $\vartheta_n$ be the $n-$th Hermite polynomial. Then, 
for any $\epsilon >0$, it holds that 
\begin{align*}
\frac{1}{2^n n!}\int_{\sqrt{n^{1+\epsilon}}}^\infty
|\vartheta_n (x)|^2 e^{-x^2}  \rd x \ &\longrightarrow 0 
\quad \mbox{as} \ n \to \infty,\\[0.2cm]
\frac{1}{2^n n!}\int_{-\infty}^{-\sqrt{n^{1+\epsilon}}}
|\vartheta_n (x)|^2 e^{-x^2}  \rd x \ &\longrightarrow 0 
\quad \mbox{as} \ n \to \infty.
\end{align*} 
\end{lemma}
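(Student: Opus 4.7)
The plan is to reduce the statement to a tail estimate for the normalized Hermite functions $\phi_n(x) := (2^n n!\sqrt\pi)^{-1/2}\vartheta_n(x) e^{-x^2/2}$, which form an orthonormal basis of $L^2(\R)$. Since $|\vartheta_n|^2 e^{-x^2}$ is an even function and $|\phi_n(x)|^2 = (2^n n!\sqrt\pi)^{-1}|\vartheta_n(x)|^2 e^{-x^2}$, the two integrals in the statement are equal and it suffices to prove that
\[
T_n := \int_{|x|\ge \sqrt{n^{1+\epsilon}}} |\phi_n(x)|^2 \rd x \longrightarrow 0 \quad \text{as}\ n\to\infty.
\]

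The main idea is to control $T_n$ by a high moment of the position operator. For every fixed integer $k\ge 1$, Chebyshev's inequality yields
\[
T_n \le \frac{1}{n^{k(1+\epsilon)}} \int_\R x^{2k} |\phi_n(x)|^2 \rd x = \frac{1}{n^{k(1+\epsilon)}} \sps{\phi_n}{X^{2k}\phi_n},
\]
where $X$ denotes multiplication by $x$. To estimate the right-hand side I would pass to the creation/annihilation formalism: writing $\sqrt 2\, X = a + a^*$ with $a\phi_n = \sqrt n\,\phi_{n-1}$ and $a^*\phi_n = \sqrt{n+1}\,\phi_{n+1}$, the expansion of $(a+a^*)^{2k}$ is a sum of $4^k$ monomials of length $2k$; evaluating any such monomial on $\phi_n$ shifts the index by at most $2k$, and each of the $2k$ factors contributes at most $\sqrt{n+2k}$. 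Hence $\sps{\phi_n}{X^{2k}\phi_n} \le C_k n^k$ for $n$ large.

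Substituting this bound gives $T_n \le C_k n^{-k\epsilon}$, which tends to $0$ as $n\to\infty$ already for $k=1$ (and arbitrarily fast by increasing $k$). I do not foresee any serious obstacle; the only point requiring a small amount of care is the moment bound $\sps{\phi_n}{X^{2k}\phi_n} = O(n^k)$, which can alternatively be derived by induction on $k$ from the three-term recurrence $\sqrt 2\, x\phi_n(x) = \sqrt n\,\phi_{n-1}(x) + \sqrt{n+1}\,\phi_{n+1}(x)$, starting from the base case $\sps{\phi_n}{X^2\phi_n} = n+\tfrac12$.
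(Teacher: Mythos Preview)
Your argument is correct. The reduction to the normalized Hermite functions $\phi_n$ and the use of symmetry are exactly as in the paper, but from there on the two proofs diverge.

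The paper proceeds by a crude \emph{pointwise} bound: it uses the explicit polynomial representation of $\vartheta_n$ to get $|\vartheta_n(x)|\le \tfrac{n+1}{2}\,2^n n!\,|x|^n$ for $|x|\ge 1$, and then estimates the resulting integral $\int_{\sqrt{n^{1+\epsilon}}}^\infty x^{2n}e^{-x^2}\,\rd x$ by extracting the Gaussian tail $e^{-n^{1+\epsilon}/2}$ and invoking Stirling's formula to beat the factorial prefactors. This yields super-exponential decay of the tail.

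Your route is a moment (Chebyshev) argument combined with the creation/annihilation calculus. It is cleaner in that it avoids both the explicit combinatorial formula for $\vartheta_n$ and Stirling's formula; the only input is the three-term recurrence, which immediately gives $\sps{\phi_n}{X^{2k}\phi_n}=O(n^k)$. The price is a weaker quantitative bound ($n^{-k\epsilon}$ for each fixed $k$, hence faster than any polynomial but not the super-exponential rate the paper obtains), which is however entirely sufficient for the lemma as stated and for its application in the proof of Theorem~\ref{thm4}.
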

\begin{proof}
We only treat the first case since the second claim can 
be deduced from the first one by a symmetry argument.
Due to the identity 
$$\vartheta_n(x) = (-1)^n \sum_{k_1+2 k_2=n} 
\frac{n!}{k_1!k_2!} (-1)^{k_1+k_2}(2x)^{k_1}$$
for the $n-$th Hermite polynomial (see e.g. \cite{Abramowitz}),
we obtain for $|x|\ge 1$ the estimate
\begin{align*}
|\vartheta_n(x)| \le \sum_{k_1+2 k_2=n} 
\frac{n!}{k_1!k_2!} (2|x|)^{k_1}
\le \frac{n+1}{2}\,  2^n n! |x|^n.
\end{align*}
Thus, for $n \in \N$ large enough we have
\begin{align*}
\frac{1}{2^nn!}\int_{\sqrt{n^{1+\epsilon}}}^\infty
|\vartheta_n(x)|^2 e^{-x^2}\rd x 
&\le \frac{(n+1)^2}{4}2^n n!\int_{\sqrt{n^{1+\epsilon}}}^\infty
x^{2n} e^{-x^2} \rd x \\[0.1cm]
&\le \frac{(n+1)^2}{4}2^n
n^{1+\epsilon} n! \, \exp{(-n^{1+\epsilon}/2)}
\int_{\sqrt{n^{1+\epsilon}}}^\infty e^{-x^2/2}  \rd x, 
\end{align*}
and the r.h.s. tends to $0$ as $n \to \infty$ by Stirling's 
Formula.
\end{proof}
\end{appendix}
\bibliographystyle{plain} 

\begin{thebibliography}{10}

\bibitem{Abramowitz}
M. Abramowitz and I.~A. Stegun.
\newblock {\em Handbook of mathematical functions with formulas, graphs, and
  mathematical tables}.
\newblock National Bureau of Standards Applied Mathematics Series, vol. 55.
  Washington, D.C., 1964.

\bibitem{Avron_Herbst_Simon_1}
J.~Avron, I.~Herbst and B.~Simon.
\newblock Schr\"odinger operators with magnetic fields. {I}. {G}eneral
  interactions.
\newblock {\em Duke Math. J.}, 45(4):847--883, 1978.

\bibitem{Chernoff77}
P.~R. Chernoff.
\newblock Schr\"odinger and {D}irac operators with singular potentials and
  hyperbolic equations.
\newblock {\em Pacific J. Math.}, 72(2):361--382, 1977.

\bibitem{Kirsch_simon}
H.~L. Cycon, R.~G. Froese, W.~Kirsch, and B.~Simon.
\newblock {\em Schr\"odinger operators with application to quantum mechanics
  and global geometry}.
\newblock Texts and Monographs in Physics. Springer-Verlag, Berlin, study
  edition, 1987.

\bibitem{Erdoes_overview}
L. Erd{\H{o}}s.
\newblock Recent developments in quantum mechanics with magnetic fields.
\newblock In {\em Spectral theory and mathematical physics: a {F}estschrift in
  honor of {B}arry {S}imon's 60th birthday}, volume~76 of {\em Proc. Sympos.
  Pure Math.}, pages 401--428. Amer. Math. Soc., Providence, RI, 2007.

\bibitem{Gilbarg_Trudinger}
D. Gilbarg and N.~S. Trudinger.
\newblock {\em Elliptic partial differential equations of second order}, volume
  224 of {\em Grundlehren der Mathematischen Wissenschaften [Fundamental
  Principles of Mathematical Sciences]}.
\newblock Springer-Verlag, Berlin, 2nd edition, 1983.

\bibitem{Hellwig}
G. Hellwig.
\newblock {\em Partial differential equations: an introduction}.
\newblock B. G. Teubner, Stuttgart, 2nd edition, 1977.
\newblock Translated from the German by Eberhard Gerlach, Mathematische
  Leitf{\"a}den.

\bibitem{Iwatsuka1990}
A. Iwatsuka.
\newblock Essential selfadjointness of the {S}chr\"odinger operators with
  magnetic fields diverging at infinity.
\newblock {\em Publ. Res. Inst. Math. Sci.}, 26(5):841--860, 1990.

\bibitem{kondratiev2005}
V. Kondratiev, V. Maz'ya, and M. Shubin.
\newblock Discreteness of spectrum and strict positivity criteria for magnetic
  {S}chr{\"o}dinger operators.
\newblock {\em Comm. Part. Diff. Eq.},
  29(3-4):489--521, 2005.

\bibitem{MehringStock}
J. Mehringer and E. Stockmeyer.
\newblock Confinement-deconfinement transitions for two-dimensional {D}irac
  particles.
\newblock {\em J. Funct. Anal.}, 266(4):2225--2250, 2014.

\bibitem{SimonMiller80}
K. Miller and B. Simon.
\newblock Quantum magnetic {H}amiltonians with remarkable spectral properties.
\newblock {\em Phys. Rev. Lett.}, 44:1706--1707, 1980.

\bibitem{Reed_Simon_2}
M. Reed and B. Simon.
\newblock {\em Methods of modern mathematical physics. {II}. {F}ourier
  analysis, self-adjointness}.
\newblock Academic Press, New York, London, 1975.

\bibitem{Thaller}
B. Thaller.
\newblock {\em The {D}irac equation}.
\newblock Texts and Monographs in Physics. Springer-Verlag, Berlin, 1992.

\end{thebibliography}

\end{document}